% Uncomment one and just one of the three below

%\documentclass[smallextended,draft]{svjour3}
%\documentclass[referee]{svjour3}
%\documentclass[final]{svjour3}

%\documentclass{article}
\documentclass[superscriptaddress]{revtex4}

\usepackage{graphicx}
\usepackage{dcolumn}
\usepackage{bm}

\usepackage{amssymb}
\usepackage{amsmath}
\usepackage{amsthm}

\begin{document}

%#############################################################################################

\def \Z {\mathbb{Z}}
\def \R {\mathbb{R}}
\def \C {\mathbb{C}}
\def \La {\Lambda}
\def \la {\lambda}
\def \ck {l}

\newcommand {\md} [1] {\mid\!#1\!\mid}
\newcommand {\be} {\begin{eqnarray}}
\newcommand {\ee} {\end{eqnarray}}
\newcommand {\ben} {\begin{eqnarray*}}
\newcommand {\een} {\end{eqnarray*}}
\newcommand {\tit} [1] {``#1''}

\newtheorem{theorem}{Theorem}
\newtheorem{lemma}{Lemma}
\newtheorem{corollary}{Corollary}

\title{Heat flow in anharmonic crystals with internal and external stochastic baths: A convergent polymer expansion for a model with discrete time and long range interparticle interaction}

\author{Emmanuel Pereira}
\email[Corresponding author: ]{emmanuel@fisica.ufmg.br}
\author{Mateus S. Mendon\c{c}a}
\email{mateussm@fisica.ufmg.br}
\affiliation{Departamento de F\'{\i}sica--Instituto de Ci\^encias Exatas, Universidade Federal de Minas Gerais, CP 702,
30.161-970 Belo Horizonte MG, Brazil}
\author{Humberto C. F. Lemos}
\email{humbertolemos@ufsj.edu.br}
\affiliation{Departamento de F\'{\i}sica e Matem\'atica, CAP -
Universidade Federal de S\~ao Jo\~ao del-Rei, 36.420-000, Ouro Branco, MG, Brazil}
\affiliation{Department of Physics, FMF, University of Ljubljana, Jadranska 19, 1000 Ljubljana, Slovenia}
% \email{; ; }
% \affiliation{}
%\affiliation{$^1$Departamento de F\'{\i}sica--Instituto de Ci\^encias Exatas, Universidade Federal de
%Minas Gerais, CP 702,
%30.161-970, Belo Horizonte MG, Brazil.\\
%$^2$.}

%\titlerunning{Heat flow in anharmonic crystals: A convergent polymer expansion}
%\subtitle{Do you have a subtitle?\\ If so, write it here}
%\author{Emmanuel Pereira \and Mateus S. Mendon\c{c}a \and Humberto C. F. Lemos
% etc
% \thanks is optional - remove next line if not needed
%\thanks{\emph{Present address:} Insert the address here if needed}%
%}                     % Do not remove
%
%\institute{E. Pereira \and M. S. Mendon\c{c}a
%\at Departamento de F\'{\i}sica -- ICEx, UFMG, CP 702, 30.161-970, Belo Horizonte MG, Brazil. ~\email{emmanuel@fisica.ufmg.br; %mateussm@fisica.ufmg.br}
%\and
%H. C. F. Lemos
%\at Departamento de F\'{\i}sica e Matem\'atica, CAP - Universidade Federal de S\~ao Jo\~ao Del Rey, 36420-000, Ouro Branco, MG, Brazil.
%\email{humbertolemos@ufsj.edu.br}
%}
%
%
%\institute{Departamento de F\'{\i}sica -- ICEx, UFMG, CP 702, 30.161-970, Belo Horizonte MG, Brazil. ~\email{emmanuel@fisica.ufmg.br;
% mateussm@fisica.ufmg.br}
%  \and
%  Departamento de F\'{\i}sica e Matem\'atica, CAP - UFSJ, 36420-000, Ouro Branco, MG, Brazil.
% \email{humbertolemos@ufsj.edu.br}}
%\authorrunning{E. Pereira \and M.S. Mendon\c{c}a \and H.C.F. Lemos}
%

% The correct dates will be entered by Springer
%
% Add name of the expert who has communicated your paper
%\communicated{name}
%

%

%\today

%

\begin{abstract}
We investigate a chain of oscillators with anharmonic on-site potentials, with long range interparticle interactions, and coupled both to external and internal
stochastic thermal reservoirs of Ornstein-Uhlenbeck type. We develop an integral representation, a la Feynman-Kac, for the correlations and
the heat current. We assume the approximation of discrete times in the integral formalism (together
with a simplification in a subdominant part of the harmonic interaction) and develop a suitable polymer expansion for the model. In the regime of strong anharmonicity, strong harmonic pinning, and
for the interparticle interaction with integrable polynomial decay, we prove the convergence of the polymer expansion uniformly
in volume (number of sites and time). We also show that the two-point correlation decays in space
such as the interparticle interaction. The existence of a convergent polymer expansion is of practical interest: it
establishes a rigorous support for a perturbative analysis of the heat flow
problem and for the computation of the thermal conductivity in related anharmonic crystals, including those with inhomogeneous potentials and long range interparticle interactions. To show the usefulness and
trustworthiness of our approach, we compute the thermal conductivity of a specific anharmonic chain, and make a comparison with related numerical results presented in the literature.
\end{abstract}

\maketitle

\let\a=\alpha \let\b=\beta \let\c=\chi \let\d=\delta \let\e=\varepsilon
\let\f=\varphi \let\g=\gamma \let\h=\eta    \let\k=\kappa \let\l=\lambda
\let\m=\mu \let\n=\nu \let\o=\omega    \let\p=\pi \let\ph=\varphi
\let\r=\rho \let\s=\sigma \let\t=\tau \let\th=\vartheta
\let\y=\upsilon \let\x=\xi \let\z=\zeta
\let\D=\Delta \let\F=\Phi \let\G=\Gamma \let\L=\Lambda \let\Th=\Theta
\let\P=\Pi \let\Ps=\Psi \let\Si=\Sigma \let\X=\Xi
\let\Y=\Upsilon

\section{Introduction} \label{sec:intro}

The derivation of the macroscopic laws of heat transport from the underlying microscopic Hamiltonian models is still a
challenge in non-equilibrium statistical mechanics \cite{BLRB}. Since the pioneering work of Debye \cite{Deb} and Peierls \cite{Pei},
the   microscopic models recurrently used to describe heat flow in solids and crystals are mainly given by lattices of anharmonic
 oscillators, which lead to problems of considerable mathematical difficulty. Consequently,
 most of the works on the subject are carried out by means of computer simulations \cite{DharLLP,LiRMP}. There are, however,  some
 few mathematical results considering different aspects of the anharmonic heat flow problem: for example, there are rigorous works considering the existence of non-equilibrium
 stationary state \cite{EckmannPR,EckmannH,BLLO}; the rate of divergence of the thermal conductivity with the system size in Fermi-Pasta-Ulam type
 models \cite{LSpohn}; the on-set of Fourier's law in lattices with anharmonic on-site potentials \cite{Kup1, Kup2, LS}; the finiteness or infiniteness of the thermal conductivity given by a Green-Kubo  formula \cite{BLLO,BasileBOlla}.

 In systems with normal heat transport, the establishment of bounds showing the finiteness of the thermal conductivity is already an interesting and intricate problem. However, the
 derivation of more precise expressions, which seems to be an exceedingly difficult task,
 is highly desirable, both for fundamental reasons as well as in order to provide useful information about properties of
 the heat conduction with experimental applications, such as the possibility of thermal rectification, the existence of negative differential thermal resistance, etc. In particular, concerning possible applications,
 it is worth to recall the progress
 of Phononics \cite{LiRMP}, the counterpart of electronics devoted to the manipulation and control of the heat current. A
  considerable effort has been dedicated to the development of Phononic devices, systems idealized to work as electronic analogs, such as thermal diodes, thermal
 transistors, etc. The basic phenomenon behind the operation of these devices is the thermal rectification, which means asymmetric heat flow, and its understanding involves the investigation of
 inhomogeneous, asymmetric materials. There is an intense research in this subject, but, again, most of the results are obtained by means of computer simulations or numerical techniques, and so, a profitable analytical approach is
 opportune.

 In the present work, we aim to develop an approach that allows the detailed study of the heat flow and that can be used in the computation of an expression for the thermal conductivity in reasonable approximations of recurrent
 lattice models of anharmonic oscillators. Namely, we consider the chain of oscillators with anharmonic on-site potentials, nonlocal interparticle interactions and with stochastic baths coupled to each site: a model in which the
 time evolution is given by a combination of deterministic and stochastic dynamics.
  We develop an integral representation (a la Feynman-Kac) for the correlation functions, which are related to the heat flow and
 to the thermal conductivity. To make treatable the analysis, we introduce a discrete time regularization, i.e., an ultraviolet cutoff. To be free of huge subdominant terms and avoid unimportant technical difficulties, we also consider a simplified expression for the subdominant harmonic interaction. For this  discrete time and simplified version, we present a suitable polymer (cluster) expansion, and
 prove its convergence uniformly in volume (arbitrary number of sites and arbitrary times). We consider a system with interparticle interactions beyond nearest neighbor sites: precisely, our approach
 is also valid  for interparticle potentials with integrable polynomial decay. For these systems with long range interactions, we show that the two-point correlation function,
 which is directly related to the heat current and the thermal conductivity, decays in space such as the interparticle interaction.

We must emphasize that the existence of an integral representation for the correlations and a convergent cluster expansion for such representation, the main technical achievement of the present work,  is of practical importance: it allows an accurate perturbative investigation of the heat flow problem in these systems with anharmonic oscillators. Precisely, the convergence
of the expansion proves that the terms with small polymers and with small sizes, which correspond to the terms of lower order in a ``naive'' perturbative expansion for the potential interaction, contain already the main information about the model.
That is, a convergent polymer expansion
appears as a support for the validity of some theoretical results previously obtained by means of non rigorous perturbative computations \cite{PF,PF2,PA,PFL,Prapid}, and may
 provide a precise tool for further investigation of the heat flow problem, even in more intricate systems such as inhomogeneous, graded chains or
models with long range interparticle interactions. In specific, for these much more complicate models with long range interactions, we recall that present technology permits the fabrication of such systems. For example, nowadays nanomagnets of Permalloy are lithographically manipulated to present interesting properties \cite{Perma}: these materials are intensively investigated and their interparticle interactions present polynomial decay (such as that considered in the present work).

We stress that the model to be investigated here, given by a chain of anharmonic oscillators (with inner noises, representing extra effective interactions missing in the Hamiltonian), is a natural model for the investigation of the heat conduction in the nonequilibrium steady state of electric insulating solids submitted to different temperatures.
As a motivation to understand the importance and usefulness of our approach and results for the (necessary and difficult) analytical investigation of such anharmonic systems,
and also as a justification for the discrete approximation to be used here, we recall some recent and important related works carried out within considerable approximations. A very elaborate work involving
lattices of anharmonic oscillators  is that due to Bricmont and Kupiainen \cite{Kup1, Kup2}. In these articles, restricted to systems with space dimension $d\geq 3$, the authors take a system with baths at the boundaries only and derive the Fourier's law  for the case of a quartic on-site potential. As well known, Fourier's law is the phenomenological basic law for the heat transport which states that the
heat flow is proportional to temperature gradient. They show that the correlation functions of the system satisfy an infinite set of linear
equations (Hopf equations), and, to carry out the investigation,  they have to assume a closure approximation to these equations (according to the authors: ``this is an uncontrolled approximation that we do not know how to justify rigorously'' \cite{Kup2}).
To emphasize, again, the difficulty of the analytical study, we quote some other comments:
``despite its fundamental nature, a derivation of Fourier's law from
first principles lies well beyond what can be mathematically
proven'' \cite{Kup1}; ``a first principle derivation of the law is
missing and, many would say, is not even on the horizon''
\cite{Kup2}. Another elaborate and interesting related work is that due to Olla and collaborators \cite{BasileBOlla}. There, to investigate the relation between normal heat transport and space dimension in systems with
momentum conservation, the authors consider a hypothetical mathematical model given by harmonic oscillators but perturbed by a nonlinear stochastic dynamics conserving momentum and energy. And the authors say:
``a rigorous treatment of a non-linear system, even the
proof of the conductivity coefficient, is out of reach of current
mathematical techniques'' \cite{BasileBOlla}.

Given such scenario, in this present paper, in order to perform rigorous analytical investigations in this basic and recurrent model given by anharmonic lattice of oscillators, we also have to assume an approximation: namely, after
establishing a rigorous integral representation for the correlations (related to the heat flow), we simplify the dynamics by assuming the evolution given by discrete times. Within such time discretization, we are able to
rewrite the integral representation, which originally involves a Gaussian measure, in terms of a new intricate measure with anharmonic terms. We stress that it is a central point: by starting with this new ``correct'' measure (which really involves
the anharmonic interaction part), we are able to develop a convergent polymer expansion, or, in other words, we can perform a rigorous (convergent) perturbative analysis in the problem, which seems impossible with the original
Gaussian measure. Of course,  the exceeding difficulty of the original problem remains: we do not know how to control the limit of time discretization going to zero, i.e., we cannot treat the original problem with continuous time. This complication in recovering the continuous limit from a discrete version is very common in physical problems with intricate interactions: recall, for example, the very arduous study of the
ultraviolet limit of models in Quantum Field Theory (QFT) with a cutoff \cite{renormalization}. Here, we control the infrared limit of the model: i.e.,  we develop the polymer expansion and
obtain uniform bounds, which are valid even if the space and time volume goes to infinite.
Finally, we must emphasize that our formalism with discrete times gives a very accurate result in a comparison with well known numerical works. Precisely, when applied to the computation of the thermal conductivity dependence on temperature for the anharmonic chain with quartic on-site potential. That is, we believe to have a good approach for this difficult problem of anharmonic oscillators. Moreover, as already said, our formalism is also extended to intricate systems with long range interactions, systems with important physical properties.

We organize this paper as follows.
In section \ref{sec:model}, we introduce the model and develop an integral representation for the correlations, that are directly related to the heat flow.
In this integral formalism, we still introduce the time discretization and a simplification for part of the harmonic interaction (simplification that is discussed in appendix \ref{apendice}).
In section \ref{sec:poly}, we introduce a polymer
expansion for the model. We prove the convergence of such polymer expansion in section \ref{sec:conv}.
In section \ref{sec:decay}, we show the convergence for the modified polymer expansion associated to the two-point correlation. A concrete example, which makes clear the direct application of our results in the detailed computation of the thermal conductivity, is presented in section \ref{sec:exemplo}. Section \ref{sec:final} is devoted for final remarks.

\section{The Model and The Integral Representation} \label{sec:model}

We describe our anharmonic crystal model. We consider a lattice system with unbounded variables, coupled to both external and internal heat baths of Ornstein-Uhlenbeck type. More precisely, we take a
system of $N$ oscillators with Hamiltonian
\begin{equation}
H(q,p)\! = \!\!\! \sum_{j=1}^{N}\! \left[ \frac{1}{2} \! \left({p_j^2} + M_j
q_j^2  + \sum_{l\neq j}q_l J_{lj}q_j \right) + \lambda P(q_{j})
\right] , \label{Hamiltonian}
\end{equation}
where $q$ and $p$ are vectors in $\mathbb{R}^{N}$; $\lambda,  M_j >0$; $J_{jl}=J_{lj}=f(|\ell-j|)$, $f$ with some integrable decay (details ahead);  ${P}$ is the anharmonic on-site potential, which we take as the polynomial ${P}(q_j) = q_{j}^{4}/4$. For
simplicity, we take here the particle masses as 1 (but our method follows also for general cases, including inhomogeneous distributions for the
particle masses), and we assume the space dimension $d=1$. We take the dynamics given by the stochastic differential
equations
\begin{equation}
dq_j\! =\! {p_j} dt , ~~
dp_j\! =\! -\frac{\partial H}{\partial q_j}dt-\zeta_{j}
p_jdt+\gamma^{1/2}_j dB_j  , \label{eqdynamics}
\end{equation}
where $B_j$ are independent Brownian motions, with zero average and diffusion equal to 1
\begin{equation} \label{noise}
\left< B_{j}(t)\right> = 0, ~~ \left< B_{j}(t)B_{\ell}(s)\right> = \delta_{i,j}{\rm min}(t,s) ,
\end{equation}
 $\zeta_{j}$ is the constant coupling between site
$j$ and its reservoir; $\gamma_j=2\zeta_j T_j$, where
$T_j$ is the temperature of the $j$-th  bath.

To study the heat flow, we first define the energy of a single oscillator
\begin{equation}
H_{j}(q_{j},p_{j})\! = \!\ \frac{p_j^2}{2}   + \frac{1}{2}\sum_{l\neq j}V(q_l - q_j) + V_{2}(q_j)  , \label{Hamiltoniansingle}
\end{equation}
where $H(q,p) = \sum_{j}H_{j}(q_{j},p_{j})$. The expression for $V$ comes after writing the interparticle potential in the  Hamiltonian above
 as $\frac{1}{2}\sum_{\ell\neq j}V(q_{j}-q_{\ell}) =$
$ \frac{1}{2}\sum_{\ell\neq j} J_{j\ell}(q_{j}-q_{\ell})^{2}$
(with adjustments in $M_j$, the pinning constant, coefficient of $q_{j}^{2}$);
$V_{2}$ describes the on-site potentials above and involves the terms $\lambda P(q_{j}) + M_{j}q_{j}^{2}$ plus some terms with $q_{j}^{2}$, which
appear as we write $q_{j}J_{j,k}q_{k}$ as $J_{j,k}q_{j}q_{k}/2$ (as said, these terms may be treated as an adjustment in $M_{j}$). Thus, we have
\begin{eqnarray}
& &\left<\frac{dH_j}{dt}(t) \right> =  \mathcal{R}_j(t) + \left< \mathfrak{F}_{\rightarrow j}-\mathfrak{F}_{j\rightarrow} \right>,
\\
& &\mathfrak{F}_{j\rightarrow}\!\!=\!\! \sum_{\ell>j}\nabla_{j} V(q_j-q_{\ell})\left(\frac{p_j}{2}+\frac{p_{\ell}}{2}\right)
 = \sum_{\ell>j}J_{j\ell}(q_j-q_{\ell})\left(\frac{p_j}{2}+\frac{p_{\ell}}{2}\right)\!\!,\label{fluxo1}
\\
& &\mathfrak{F}_{\rightarrow j}\!\!=\!\!\sum_{\ell<j}\nabla_{j} V(q_j-q_{\ell})\left(\frac{p_j}{2}+\frac{p_{\ell}}{2}\right)
 = \sum_{\ell<j}J_{j\ell}(q_j-q_{\ell})\left(\frac{p_j}{2}+\frac{p_{\ell}}{2}\right)\!\!, \label{fluxo2}
\\
& &\mathcal{R}_j(t) = \zeta_{j}\left( T_{j} - \left< p_{j}^{2}\right>\right) \label{reservatorio}.
\end{eqnarray}
(More details about the derivation of such equations may be found, e.g., in Refs.\cite{PF, BLL}.)
$\mathfrak{F}_{j\rightarrow}$ gives the heat current from site $j$ to the forward sites $\ell>j$; $\mathfrak{F}_{\rightarrow j}$ gives the current
from the previous sites $\ell<j$. $\mathcal{R}_{j}$ denotes the energy flux between the $j$-th site and the $j$-th reservoir. These models with internal
stochastic reservoirs are recurrent, and have been considered in several works \cite{BLLO,BRV,BLL,PF}, usually with the self-consistent
condition, which means that the temperatures of the internal reservoirs are chosen such that there is no net energy flux between these internal baths
and the system in the steady state, i.e., $\lim_{t\rightarrow\infty}\mathcal{R}_{j}(t) = 0$, for $j= 2, 3, \ldots, N-1$. In other words, in the stationary state with the self-consistent condition we get a heat current across the system supplied only by the external
baths at the boundaries with different temperatures. The existence of a steady state in the system with the self-consistent condition (that is, the existence of
this suitable choice of internal temperatures) is proven in Ref.\cite{SZ} and Ref.\cite{BLLO}, for the harmonic and anharmonic cases, respectively. In the present paper, we assume that
the temperatures $T_{1}, T_{2}, \ldots, T_{N-1},T_{N}$ are arbitrarily given, chosen from a set with lower bound, i.e., there is a $T_{min}$ such that $T_{min} \leq T_{1}, T_{2}, \ldots,
T_{N}$, for all $N$. In the Final Remarks section, we recall some physical problems that consider the self-consistent condition.

It is interesting to note the generality of the temperature distribution allowed here. The self-consistent condition, usually assumed with these models with inner reservoirs (and which is considered in
the example described in section \ref{sec:exemplo}), is related to a specific temperature
profile as recalled above, but our approach follows also for many other cases. An interesting problem involving such models of oscillators with inner reservoirs but without the self-consistent condition is presented in Ref.\cite{Zabey}.

To proceed, we introduce the
phase-space vector $\varphi=(q,p) \in \mathbb{R}^{2N}$, and write the dynamics (\ref{eqdynamics}) as
\begin{equation}
d{\varphi}= -A \varphi dt -\lambda{P}'(\varphi) dt + \sigma dB,
\label{dynamics}
\end{equation}
where $A=A^0+\mathcal{J}$ and $\sigma$ are $2N\times 2N$ matrices
\begin{eqnarray}
A^0=\left ( \begin{array}{cc} 0 & - I \\
\mathcal{M} & \Gamma  \end{array} \right ),~~~~\mathcal{J}=\left (
\begin{array}{cc} 0 & 0 \\
\mathbb{J} & 0 \end{array} \right ),~~~~\sigma=\left (\begin{array}{cc} 0 & 0 \\
0 & \sqrt{2\Gamma\mathcal{T}} \end{array}\right ),
\end{eqnarray}
$I$ is the unit $N\times N$ matrix; $\mathbb{J}$ is the $N\times N$
matrix for the interparticle interactions; $\mathcal{M}$,
$\Gamma$ and $\mathcal{T}$ are the diagonal $N\times N$ matrices, with positive elements:
$\mathcal{M}_{jl}=M\delta_{jl}$, $\Gamma_{jl}=\zeta\delta_{jl}$,
$\mathcal{T}=T_j\delta_{jl}$. $B$ are independent Brownian motions; ${P}'(\varphi)$ is a $2N \times 1$ matrix with
${P}'(\varphi)_j=0$ for $ j=1,\ldots,N$ and
\begin{equation}
{P}'(\varphi)_i=\frac{d{P}(\varphi_{i-N})}{d\varphi_{i-N}}
\quad  \mbox{for}  \quad i=N+1,\ldots,2N.
\end{equation}

For while,  let us use the following index notation: $i$ for indices
in the set $\{N+1,N+2,\cdots,2N\}$ (related to momenta coordinates); $j$ for values in the set
$\{1,2,\cdots,N\}$ (related to space position coordinates),  and $k$ for values in $\{1,2,\cdots ,2N\}$. Throughout the paper, we will
 omit obvious sums over repeated indices.

As described above (\ref{fluxo1}, \ref{fluxo2}), the heat flux across the chain is given in terms of two-point functions. Thus, to
obtain a mechanism to study heat conduction, we develop an integral representation for the correlation
functions, in which a rigorous control is possible (after adjustments such as time discretization) by using standard methods of field theory and equilibrium
statistical physics, namely, polymer expansions.

We start the construction of the integral formalism with the solution of the linear ($\lambda = 0$) and decoupled
($\mathcal{J}\equiv 0$) dynamical system. We have

\begin{lemma} \label{lemma:OU}
The solution $\phi(t)$ of Eq.(\ref{dynamics}) with $\lambda=0, \mathcal{J}\equiv 0 $, i.e. of equation
\begin{equation*}
d{\phi}=-A^{0} \phi dt + \sigma dB,
\end{equation*}
is the
Ornstein-Uhlenbeck Gaussian process
\begin{equation}
\phi(t)=e^{-tA^0}\phi(0)+\int_0^t e^{-(t-s)A^0}\sigma dB(s),
\end{equation}
where, for the case of $\phi(0)=0$, the covariance of the
process evolves as
\begin{eqnarray}
\left <\phi(t)\phi(s)\right >_0\equiv \mathcal{C}(t,s)=\left \{
\begin{array}{c} e^{-(t-s)A^0}\mathcal{C}(s,s), ~~t\geq s, \label{covariance1} \\
\mathcal{C}(t,t)e^{-(s-t)A^{0^\dagger}}, ~~t\leq s,\end{array}\right . \\
\mathcal{C}(t,t)=\int_0^tdse^{-sA^0}\sigma^2e^{-sA^{0^\dagger}}.
\label{covariance}
\end{eqnarray}
\end{lemma}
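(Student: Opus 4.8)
The plan is to treat this as the standard construction of a multivariate Ornstein--Uhlenbeck process by the method of the integrating factor, followed by a direct second-moment computation using the It\^o isometry. Since $A^0$ is a constant matrix, the deterministic semigroup $e^{tA^0}$ is smooth in $t$, so I would first introduce the auxiliary process $Y(t)=e^{tA^0}\phi(t)$. Because $e^{tA^0}$ is nonrandom and $\phi$ solves an SDE with \emph{additive} (state-independent) noise, It\^o's formula collapses to the ordinary product rule with no quadratic-variation correction, giving
\[
dY = A^0 e^{tA^0}\phi\,dt + e^{tA^0}\,d\phi = A^0 e^{tA^0}\phi\,dt + e^{tA^0}\bigl(-A^0\phi\,dt + \sigma\,dB\bigr) = e^{tA^0}\sigma\,dB.
\]
Integrating from $0$ to $t$ yields $Y(t)=\phi(0)+\int_0^t e^{sA^0}\sigma\,dB(s)$, and multiplying back by $e^{-tA^0}$ recovers exactly the claimed solution $\phi(t)=e^{-tA^0}\phi(0)+\int_0^t e^{-(t-s)A^0}\sigma\,dB(s)$. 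Uniqueness would follow immediately from linearity: the difference of two solutions solves the deterministic homogeneous equation $\dot u = -A^0 u$ with zero initial data.

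Next I would address the Gaussian character and the mean. With $\phi(0)=0$ deterministic, $\phi(t)$ is a Wiener integral of a nonrandom matrix-valued integrand against Brownian motion; as an $L^2$-limit of linear combinations of Gaussian increments it is itself centered Gaussian, so $\left<\phi(t)\right>_0=0$ and the law is fully determined by its covariance. For the covariance I would apply the It\^o isometry in vector form, using $\left< dB(u)\,dB(v)^\dagger\right>_0 = I\,\delta(u-v)\,du$, to obtain
\[
\left<\phi(t)\phi(s)^\dagger\right>_0 = \int_0^{\min(t,s)} e^{-(t-u)A^0}\,\sigma\sigma^\dagger\, e^{-(s-u)A^{0^\dagger}}\,du.
\]
Since $\sigma$ is symmetric (its only nonzero block is the diagonal $\sqrt{2\Gamma\mathcal{T}}$), I may replace $\sigma\sigma^\dagger$ by $\sigma^2$. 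Setting $t=s$ and substituting $v=t-u$ then gives directly the equal-time formula $\mathcal{C}(t,t)=\int_0^t e^{-vA^0}\sigma^2 e^{-vA^{0^\dagger}}\,dv$ of Eq.~(\ref{covariance}).

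For the two-time formula (\ref{covariance1}) with $t\geq s$, so that $\min(t,s)=s$, I would factor the leftmost semigroup out of the $u$-integral,
\[
\mathcal{C}(t,s) = e^{-(t-s)A^0}\int_0^s e^{-(s-u)A^0}\sigma^2 e^{-(s-u)A^{0^\dagger}}\,du = e^{-(t-s)A^0}\,\mathcal{C}(s,s),
\]
where the remaining integral is recognized as $\mathcal{C}(s,s)$ after the change of variable $v=s-u$; the case $t\leq s$ is the adjoint statement obtained by the symmetry $\mathcal{C}(t,s)=\mathcal{C}(s,t)^\dagger$. There is no genuinely hard step here, as this is a textbook OU computation; the only place requiring care is the transpose/dagger bookkeeping in the isometry and the clean factorization of the semigroup from under the integral sign, which is exactly where sign or ordering errors would otherwise creep in.
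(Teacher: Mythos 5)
Your proposal is correct and is precisely the standard textbook derivation (integrating factor via $Y(t)=e^{tA^0}\phi(t)$, then the It\^o isometry for the covariance) that the paper itself defers to by calling the lemma an ``exercise of stochastic differential equations'' with a citation to \O ksendal. No substantive difference in approach; your write-up simply supplies the details the authors omitted.
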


\begin{proof}
Exercise of stochastic
differential equations (see e.g. Ref.\cite{Ok}).
\end{proof}

We recall that these solutions may be realized as continues trajectories. Moreover, it follows that
\begin{lemma}
\begin{eqnarray}
\lim_{t\rightarrow\infty}\mathcal{C}(t,t) \equiv C = \int_0^{\infty}ds e^{-sA^0}\sigma^2e^{-sA^{0^\dagger}}=\left
(\begin{array}{cc}\frac{\mathcal{T}}{\mathcal{M}} &0 \\
0&\mathcal{T}\end{array} \right),\label{ecovariance}
\end{eqnarray}
and, for any $\alpha$ such that $0 < \alpha < {\rm min}\left\{\frac{\zeta}{2},\frac{M}{\zeta}\right\}$, there is a
constant $c < \infty$ such that, for all $t>0$ and for all $N$,
\begin{equation}
\parallel e^{-tA^{0}}\parallel \leq c e^{-t\alpha}. \label{cota}
\end{equation}
\end{lemma}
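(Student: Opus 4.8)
The plan is to prove the two claims in the reverse of the stated order, establishing the exponential bound (\ref{cota}) first and then using it to identify the limiting covariance $C$. The decisive simplification is that, in the homogeneous case treated here ($\mathcal{M}=MI$, $\Gamma=\zeta I$), the drift $A^0$ couples no two distinct sites: writing a phase vector as $(q,p)$ with $q,p\in\R^N$ one has $A^0(q,p)=(-p,\,Mq+\zeta p)$, so each pair $(q_j,p_j)$ evolves under one and the same $2\times2$ matrix
\begin{equation}
a=\left(\begin{array}{cc} 0 & -1 \\ M & \zeta \end{array}\right).
\end{equation}
Concretely, the orthogonal permutation reordering the coordinates as $(q_1,p_1,\ldots,q_N,p_N)$ turns $A^0$ into the block-diagonal matrix with $N$ identical blocks $a$. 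Since permutations preserve the operator norm, and both the exponential and the norm of a block-diagonal matrix reduce to a single block, $\|e^{-tA^0}\|=\|e^{-ta}\|$. This identity already yields the uniformity in $N$ for free and reduces the whole lemma to a fixed $2\times2$ computation.

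For (\ref{cota}) I would then study $e^{-ta}$. The eigenvalues of $a$ solve $x^2-\zeta x+M=0$, i.e. $x_{\pm}=(\zeta\pm\sqrt{\zeta^2-4M})/2$. In the underdamped regime $\zeta^2\le 4M$ both eigenvalues have real part $\zeta/2$; in the overdamped regime $\zeta^2>4M$ both roots are real and the smaller obeys $x_-=2M/(\zeta+\sqrt{\zeta^2-4M})>M/\zeta$. Hence in all cases the smallest real part $\beta$ of the eigenvalues of $a$ satisfies $\beta\ge\min\{\zeta/2,M/\zeta\}$, so any $\alpha$ below this value obeys $\alpha<\beta$. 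The only delicate point is that $a$ is \emph{not} normal, so $\|e^{-ta}\|$ need not equal $e^{-t\beta}$; I would control it by diagonalizing $a=S\Lambda S^{-1}$ when the eigenvalues are distinct, giving $\|e^{-ta}\|\le\kappa(S)\,e^{-t\beta}$ with a finite condition number $\kappa(S)$ depending only on $M,\zeta$, and by passing to the Jordan form in the borderline case $\zeta^2=4M$, where a prefactor linear in $t$ appears. Because $\alpha$ is \emph{strictly} below $\beta$, any such polynomial prefactor satisfies $t^k e^{-t\beta}\le c_\alpha e^{-t\alpha}$ with $c_\alpha<\infty$ and is absorbed into the constant $c$; this proves (\ref{cota}).

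For the limit (\ref{ecovariance}) I would argue through the Lyapunov equation. The bound just proven makes $C=\int_0^\infty e^{-sA^0}\sigma^2 e^{-sA^{0\dagger}}\,ds$ absolutely convergent. Using $A^0 e^{-sA^0}=-\frac{d}{ds}e^{-sA^0}$ together with the analogous identity for the adjoint, the integrand of $A^0C+CA^{0\dagger}$ equals the total derivative $-\frac{d}{ds}\big(e^{-sA^0}\sigma^2 e^{-sA^{0\dagger}}\big)$; integrating from $0$ to $\infty$ and using $e^{-sA^0}\to 0$ from (\ref{cota}) yields $A^0C+CA^{0\dagger}=\sigma^2$. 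It then remains only to check that the proposed block matrix with diagonal blocks $\mathcal{T}/\mathcal{M}$ and $\mathcal{T}$ solves this, a one-line multiplication using that $\mathcal{M},\Gamma,\mathcal{T}$ are diagonal (hence mutually commuting), which produces exactly $\left(\begin{array}{cc}0&0\\0&2\Gamma\mathcal{T}\end{array}\right)=\sigma^2$. Uniqueness closes the argument and again rests on the decay: if $A^0X+XA^{0\dagger}=0$ then $\frac{d}{dt}\big(e^{-tA^0}Xe^{-tA^{0\dagger}}\big)=0$, so $X=e^{-tA^0}Xe^{-tA^{0\dagger}}\to 0$, forcing $X=0$. (Equivalently, one may evaluate the limiting $2\times2$ block integral directly and recover the equilibrium equipartition values $\langle q_j^2\rangle=T_j/M$, $\langle p_j^2\rangle=T_j$, $\langle q_jp_j\rangle=0$.)

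I expect the only genuinely non-routine step to be the treatment of the non-normal $2\times2$ block: one must keep the constant $c$ (rather than $c=1$) and the strict inequality $\alpha<\min\{\zeta/2,M/\zeta\}$ precisely so that the transient growth of $\|e^{-ta}\|$ and the linear-in-$t$ Jordan prefactor at $\zeta^2=4M$ can be absorbed. Everything else is elementary $2\times2$ linear algebra together with the direct verification of the Lyapunov identity.
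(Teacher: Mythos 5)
Your proposal is correct, and it is genuinely more informative than what the paper offers: the paper does not prove this lemma at all, but simply defers to Ref.~[SZ] for Eq.~(\ref{ecovariance}) and to Ref.~[BLL] for Eq.~(\ref{cota}). Your argument supplies a self-contained replacement. The key observation --- that in the homogeneous case $\mathcal{M}=MI$, $\Gamma=\zeta I$ the drift $A^{0}$ is, up to an orthogonal permutation, block-diagonal with $N$ copies of the single $2\times2$ matrix $a$, so that $\parallel e^{-tA^{0}}\parallel=\parallel e^{-ta}\parallel$ --- is exactly what makes the bound uniform in $N$, and your case analysis of the roots of $x^{2}-\zeta x+M$ correctly shows that the spectral abscissa is at least $\min\{\zeta/2,\,M/\zeta\}$ in both the underdamped and overdamped regimes, with the strict inequality $\alpha<\min\{\zeta/2,M/\zeta\}$ absorbing the condition-number or Jordan-block prefactors of the non-normal exponential. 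The identification of $C$ via the Lyapunov equation $A^{0}C+CA^{0^{\dagger}}=\sigma^{2}$, the one-line verification that the block matrix ${\rm diag}(\mathcal{T}\mathcal{M}^{-1},\mathcal{T})$ solves it (using that $\mathcal{M},\Gamma,\mathcal{T}$ are diagonal and commute), and the uniqueness argument from the decay of $e^{-tA^{0}}$ are all sound. The only stylistic difference from the route the cited references take is that you derive the covariance algebraically from the stationarity condition rather than by computing the integral block by block; both are legitimate, and your version has the advantage of making the dependence of the constant $c$ on $M$ and $\zeta$ (and its independence of $N$ and $t$) completely explicit.
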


\begin{proof}
See Ref.\cite{SZ} for the proof of Eq.(\ref{ecovariance}) and Ref.\cite{BLL} for Eq.(\ref{cota}).
\end{proof}

It is worth to remark that, for the harmonic and decoupled system (i.e., with $\lambda = \mathcal{J} = 0$), each site $j$ is isolated from the other
sites and coupled to a single bath at temperature $T_{j}$. Then, the expected steady distribution (as $t \rightarrow \infty$) is the related Boltzmann-Gibbs
measure, given by the Gaussian measure with measure $d\mu_{C}$, with the convariance described by Eq.(\ref{ecovariance}), in which each site has the temperature $T_{j}$ of the bath coupled to it.

Now we use the Girsanov
theorem to describe a representation for the correlation functions of $\varphi(t)$, the solution for the complete process (\ref{dynamics}). We will construct an integral representation for a system
with $N$ sites and with time running from $0$ to $\mathfrak{T}$. Later, after the time discretization, we will obtain bounds, valid for all $N$ and $\mathfrak{T}$, leading to the convergence of the associated polymer expansion.

\begin{theorem}
For the correlation functions (\ref{fluxo1}-\ref{reservatorio}) of the crystal chain with reservoirs
at each site (\ref{Hamiltonian}-\ref{noise}), we have the integral representation given by
\begin{equation}
\label{correlations}
\left<\varphi_{\ell_{1}}(t_{1})\ldots \varphi_{\ell_{k}}(t_{k})\right> =
\int \phi_{\ell_{1}}(t_{1})\ldots\phi_{\ell_{k}}(t_{k})\exp[-W(\phi)]
d\mu_{\mathcal{C}}, ~~~~t_{1},\ldots, t_{k}\leq \mathfrak{T},
\end{equation}
with
\begin{eqnarray*}
W(\phi) &=& \int_{0}^{\mathfrak{T}} \!\!\!\! \phi_j(s)\mathcal{J}^{\dagger}_{ji}\gamma_i^{-1}d\phi_i(s) +
\lambda\gamma_i^{-1}{P}'(\phi)_i(t)d\phi_i(s) +
 \phi_j(s)\mathcal{J}_{ij}^{\dagger}\gamma_{i}^{-1}A^{0}_{ik}\phi_k(s) ds + \\
&& + \lambda
\gamma_{i}^{-1} {P}' (\phi)_i(s) A^0_{ik} \phi_k(s) ds +
\frac{1}{2}\phi_{j'}(s)\mathcal{J}_{j'i}^{\dagger}\gamma_i^{-1}\mathcal{J}_{ij}\phi_j(s) ds +
\\
&& + \frac{1}{2}\lambda^2\gamma_i^{-1}({P}'(\phi)_i)^2(s) ds + \lambda \gamma_{i}^{-1}
{P}' (\phi)_i(s) \mathcal{J}_{ij}\phi_j(s) ds,
\end{eqnarray*}
where $\phi$ is the solution (given by lemma \ref{lemma:OU}) of the process with
$\mathcal{J}\equiv0$, $\lambda=0$, and $\varphi$ is the solution for
the complete process (\ref{dynamics}); the covariance $\mathcal{C}$
is given by equations (\ref{covariance1}) and (\ref{covariance}). The sum over repeated indices $i, j, k, \ldots$ is assumed above (and throughout the paper, as already said).
\end{theorem}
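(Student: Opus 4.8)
The plan is to read (\ref{correlations}) as a Cameron--Martin--Girsanov change of measure, exploiting that the full process $\varphi$ and the free Ornstein--Uhlenbeck process $\phi$ of Lemma \ref{lemma:OU} solve stochastic differential equations driven by the \emph{same} noise $\sigma\,dB$, differing only in their drift. Writing $A=A^0+\mathcal{J}$, the drift of the full equation (\ref{dynamics}) exceeds that of the reference equation $d\phi=-A^0\phi\,dt+\sigma\,dB$ by $b(\phi)=-\mathcal{J}\phi-\lambda {P}'(\phi)$. The essential structural observation is that both $\mathcal{J}\phi$ and ${P}'(\phi)$ vanish on the position coordinates and are supported on the momentum coordinates $i\in\{N+1,\dots,2N\}$ --- precisely the block on which $\sigma$ is invertible, with $\sigma_{ii}=\gamma_i^{1/2}$. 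Hence, although $\sigma$ is degenerate overall, the drift shift lies in the range of $\sigma$, and Girsanov's theorem applies with the adapted integrand
\[
u_i(s)=\gamma_i^{-1/2}\bigl(-\mathcal{J}_{ij}\phi_j(s)-\lambda {P}'(\phi)_i(s)\bigr),\qquad i\in\{N+1,\dots,2N\}.
\]

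Next I would introduce the exponential functional $Z=\exp\bigl[\int_0^{\mathfrak{T}}u_i\,dB_i-\tfrac12\int_0^{\mathfrak{T}}u_i^2\,ds\bigr]$ and argue that $dQ=Z\,d\mu_{\mathcal{C}}$ defines a probability measure under which, by Girsanov, $\widetilde B_t=B_t-\int_0^t u\,ds$ is a Brownian motion and the canonical path solves exactly (\ref{dynamics}); since $\sigma u=b$ on the momentum block, the reference drift $-A^0\phi$ is shifted precisely to $-A\phi-\lambda {P}'(\phi)$. By uniqueness in law for (\ref{dynamics}) this identifies the law of $\varphi$ with $Q$, so that $\langle F(\varphi)\rangle=\int F\,Z\,d\mu_{\mathcal{C}}$ for path functionals $F$. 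To reach the stated form of $W$, I would eliminate $dB$ in favour of $d\phi$ using the reference equation on the momentum block, $\gamma_i^{1/2}\,dB_i=d\phi_i+A^0_{ik}\phi_k\,ds$, which at the level of It\^o differentials carries no extra correction; then $\int u_i\,dB_i$ splits into an It\^o integral against $d\phi_i$ plus a drift term proportional to $A^0_{ik}\phi_k\,ds$, and expanding the square in $\tfrac12\int u_i^2\,ds$ reproduces exactly the seven terms of $W(\phi)$ --- the two It\^o terms against $d\phi$ (the $\mathcal{J}$ and the $\lambda {P}'$ pieces), the two corresponding $A^0$-drift terms, and the three quadratic terms ($\mathcal{J}$--$\mathcal{J}$, $\lambda^2({P}')^2$, and the $\mathcal{J}$--$\lambda {P}'$ cross term). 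Choosing $F(\phi)=\phi_{\ell_1}(t_1)\cdots\phi_{\ell_k}(t_k)$ with $t_1,\dots,t_k\le\mathfrak{T}$ then yields (\ref{correlations}).

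I expect the main obstacle to be the rigorous justification that $Z$ is a genuine martingale with $\int Z\,d\mu_{\mathcal{C}}=1$, rather than merely a nonnegative local martingale (hence \emph{a priori} only a supermartingale with integral $\le 1$). The difficulty is that $u_i$ contains the cubic term ${P}'(\phi)_i\sim\phi^3$, so $|u|^2\sim\phi^6$ and Novikov's condition $\int\exp\bigl(\tfrac12\int_0^{\mathfrak{T}}|u|^2\,ds\bigr)\,d\mu_{\mathcal{C}}<\infty$ fails outright against the Gaussian measure. I would resolve this not through Novikov but through the dissipative, confining nature of the anharmonicity: since $P(q)=q^4/4$ makes $-\lambda {P}'$ a restoring force, a Lyapunov argument based on the confining Hamiltonian (\ref{Hamiltonian}) gives a unique global strong solution of (\ref{dynamics}) with finite moments of all orders, uniformly on $[0,\mathfrak{T}]$. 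I would then establish the Girsanov identity by localization --- stopping at $\tau_n=\inf\{s:|\phi(s)|\ge n\}$, where the truncated drift is bounded so that the stopped exponential $Z^{(n)}$ satisfies $\int Z^{(n)}\,d\mu_{\mathcal{C}}=1$ trivially --- and pass to the limit $n\to\infty$ using $\tau_n\uparrow\mathfrak{T}$ together with the uniform moment bounds to secure uniform integrability of $\{Z^{(n)}\}$. The degeneracy of $\sigma$ is not itself an obstacle, precisely because the drift shift is confined to the non-degenerate momentum block; the only delicate point is this martingale property, which hinges on the a priori control of the solution that the quartic potential conveniently supplies.
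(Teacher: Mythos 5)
Your proposal follows essentially the same route as the paper: a Girsanov change of measure with the identical integrand $u_i=\gamma_i^{-1/2}\bigl(-\mathcal{J}_{ij}\phi_j-\lambda P'(\phi)_i\bigr)$ supported on the momentum block, followed by elimination of $dB$ via $\gamma_i^{1/2}dB_i=d\phi_i+A^0_{ik}\phi_k\,ds$ and expansion of $\tfrac12\int u^2\,ds$ to produce the seven terms of $W$. The only place you depart is the martingale property of $Z$: the paper disposes of it by asserting that continuity of the realizations of $\phi$ makes $Z(t)$ bounded and $uZ$ square-integrable, whereas your localization-plus-uniform-integrability argument (correctly noting that Novikov fails against the cubic term) is the more careful, and arguably the necessary, way to close that step.
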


\begin{proof}
Girsanov theorem (see e.g. theorem $8.6.8$ in Ref.\cite{Ok}; see also Ref.\cite{BS}) gives a measure $\mu_{*}$ for the new process $\varphi$ in
terms of the measure $\mu_{\mathcal{C}}$ associated to previous Ornstein-Uhlenbeck process $\phi$ with $\mathcal{J}\equiv 0$ and $\lambda = 0$.
Precisely, for any measurable set $A$, it follows that $\mu_{*}(A) = E_{0}(1_{A}Z(\mathfrak{T}))$, where $E_{0}$ is the expectation of $\mu_{\mathcal{C}}$,
$1_{A}$ is the characteristic function, and
\begin{eqnarray}
Z(\mathfrak{T})=\exp\left(\int_{0}^{\mathfrak{T}}u\cdot dB
-\frac{1}{2}\int_{0}^{\mathfrak{T}}u^2ds\right), \label{girs} \quad \mbox{where }
\gamma_i^{1/2}u_i=-\mathcal{J}_
{ik}\phi_k-\lambda{P}'(\phi)_i .\nonumber
\end{eqnarray}
The inner products above are in $\mathbb{R}^{2N}$. Note that, following our previous index convention, $u_{i}$ is nonzero only for $i \in \{N+1, N+2, \ldots, 2N\}$.
%$\rho(\cdot)$ above defines a probability measure on $(\Omega, \mathcal{F}_{s})$, $s\leq T$, where $\Omega$ describes the space of continue trajectories $\Omega = C^{0}([0,T], \mathbb{R}^{2N})$, and $\mathcal{F}_{s}$ is
%the $\sigma$-algebra generated by $\varphi(s')$, $0\leq s' \leq s$.
To apply the Girsanov theorem we must show that $Z(t)$ is a martingale with respect to the $\sigma$-algebra generated by $\phi(t)$ and $\mu_{\mathcal{C}}$. To show it,
we define the It\^o process
$$
dX(t) = -\frac{1}{2}u^{2} dt + u\cdot dB, ~~~~ X(0)=0,
$$
with $u$ as previously defined. Then, $Z(t) = \exp[X(t)]$ is also an
It\^o process and
$$
dZ(t) = Z(t)u\cdot dB(t) \Rightarrow Z(t) = 1 +
\int_{0}^{t}Z(s)u(s)\cdot dB(s).
$$
As $\phi$ admits a
continuum realization, it follows that $Z(t)$ is bounded  and  $uZ$ is square-integrable, i.e.,
$$
E_{0}\left(\int_{0}^{t} u^{2}(s)Z^{2}(s) ds\right) < \infty.
$$
And so, it follows that $Z(t)$ is a martingale (see e.g. corollary $3.2.6$ in Ref.\cite{Ok}).

To conclude, we note that
\begin{eqnarray*}
u_idB_i=\gamma_i^{-1/2}u_i\cdot\gamma_i^{1/2}dB_i&=&\gamma_i^{-1/2}u_i\cdot\left
(d\phi_i+A^0_{ik}\phi_kdt \right )
=-\gamma_i^{-1}\left(\mathcal{J}_{ij}\phi_j+\lambda{P}'(\phi)_i\right
) \left (d\phi_i+A^0_{ik}\phi_k dt\right ) ,
\end{eqnarray*}
and, finally, we write $u^{2}$ in terms of $\phi$, see Eq.(\ref{girs}), to obtain the expression as claimed in the theorem.
\end{proof}

The detailed study of the integral representation above,
given by a complicate anharmonic perturbation  of a Gaussian measure,  seems to
be exceedingly difficult. And so, to carry out the rigorous investigation, we try to rewrite the integral representation as an expression in which a more appropriate measure may be considered. But, to establish such new representation, some simplification is necessary. In other words, we propose to follow the investigation in
an approximated version of the original problem, derived from the previous formalism as follows (version in which, such strategy of considering a
suitable non Gaussian measure is possible).

First, we make an important modification: we introduce an ultraviolet cutoff in the time integral; precisely, we assume discrete times $t= \varepsilon, 2\varepsilon, \ldots, \mathfrak{T}$.
% and so, $d\phi_{j}(s) = \phi_{j}(s+\varepsilon) - \phi_{j}(s)$, etc.
Second, to avoid
unimportant technical difficulties and  huge expressions for some subdominant terms (easily controlled in the forthcoming polymer expansion), we simplify the expression
for the covariance $\mathcal{C}$ associated to the previous Gaussian process $\phi$, given by Eqs.(\ref{covariance1}, \ref{covariance}). Precisely, we  replace  the Gaussian measure in the integral representation with such covariance $\mathcal{C}$ by its main part
\begin{eqnarray} \label{quadratica}
&d\mu_{\mathcal{C}}& \longrightarrow \left.\exp\left(-\frac{1}{2}\sum_{k,k',t,t'}\phi_{k}(t)\varepsilon \mathcal{D}^{-1}(t-t')\phi_{k'}(t')\right) \prod_{k,t} d\phi_{k}(t)\right/\mathcal{N},
\\
&\mathcal{D}^{-1}& \equiv C^{-1} \left(\frac{M}{\zeta}\delta_{t,t'} + c_{1}[-\Delta(t,t')]\right)  ~,
\nonumber
\end{eqnarray}
where $\mathcal{N}$ is the normalization; $-\Delta$ is the discrete Laplacian, $-\Delta(t,s) = 2\delta_{t,s} - \delta_{|t-s|,1}$; and $c_{1} = \mathcal{O}(1/\alpha)$ is a small parameter: we assume large $\alpha \equiv \zeta/2$ (i.e., large dissipation), and still take
a strong pinning $M = 3\alpha^{2}$. Details in  Appendix A, where  we show that this new covariance describes, indeed, the main part of the original Gaussian measure.

From the study of  chains of oscillators, it is well known  that Fourier's law holds in the harmonic system with
internal self-consistent reservoirs \cite{BLL}, but it does not hold anymore if these internal reservoirs are turned off \cite{RLL}. The scenario is
different for the chain with anharmonic on-site potentials: one expects that Fourier's law will be obeyed even without the internal baths. As it is
away from our present skill to prove it, to proceed we ignore the possibility of different coupling constants with the internal or external reservoirs and
take the same $\zeta_{j}= \zeta$ for all sites.

Hence, with discrete times and with the simplification of the harmonic covariance, the representation for the correlations
(\ref{correlations}), e.g. for the two-point function, becomes
\begin{eqnarray*}
\lefteqn{\left<\varphi_{\ell_{1}}(t) \varphi_{\ell_{2}}(t)\right> \simeq
\int  \phi_{\ell_{1}}(t)\phi_{\ell_{2}}(t)
 \exp\left\{ -\sum_{s, i,j,\ldots} \varepsilon\left[
  \phi_{j}(s)\mathcal{J}_{ji}^{\dagger}\gamma_{i}^{-1}\frac{[\phi_{i}(s+\varepsilon)-\phi_{i}(s)]}{\varepsilon} + \right.\right.} \\
   && + \gamma_{i}^{-1}\lambda
{P}'(\phi_{i-N}(s)) \frac{[\phi_{i}(s+\varepsilon)-\phi_{i}(s)]}{\varepsilon}
+ ~ \phi_{j}(s) \mathcal{J}_{ji}^{\dagger}\gamma_{i}^{-1}[M_{i-N}\phi_{i-N}(s) + \zeta\phi_{i}(s)] + \\
&&   + \gamma_{i}^{-1}\lambda_{i}
{P}'(\phi_{i-N}(s)) M_{i-N}[\phi_{i-N}(s) + \zeta\phi_{i}(s)]
 + \frac{1}{2}\phi_{j'}(s) \mathcal{J}_{j'i}^{\dagger}\gamma_{i}^{-1}\mathcal{J}_{ij}\phi_{j}(s) + \\
 &&   + \frac{1}{2}\gamma_{i}^{-1}\lambda^{2}
[{P}'(\phi_{i-N}(s))]^{2} + \gamma_{i}^{-1}\lambda {P}'(\phi_{i-N}(s))\mathcal{J}_{ij}\phi_{j}(s) + \\
&& + \left.\left.\left.
\frac{1}{2}\phi_{k}(s)
\mathcal{D}_{k,k'}^{-1}(s,s') \phi_{k'}(s')
\right] \right\}  \prod_{s,k}d\phi_{k}(s) \right/ {\rm normalization} ;
\end{eqnarray*}
moreover, taking $\varepsilon = 1/\zeta$ for simplification, we obtain
\begin{eqnarray}
\label{principal}
& & \left<\varphi_{\ell_{1}}(t_{1}) \varphi_{\ell_{2}}(t_{2})\right> \simeq
\int  \phi_{\ell_{1}}(t_{1})\phi_{\ell_{2}}(t_{2})
 \exp\left\{ -\sum_{s, i,j,\ldots} \varepsilon\left[
  \phi_{j}(s)\mathcal{J}_{ji}^{\dagger}\gamma_{i}^{-1}\phi_{i}(s+\varepsilon) + \right.\right.
\nonumber \\
&& + \frac{\lambda}{\gamma_{i}} {P}'(\phi_{i-N}(s)) \phi_{i}(s+\varepsilon)  +
\phi_{j}(s) \mathcal{J}_{ji}^{\dagger}\frac{M_{i-N}}{\gamma_{i}}\phi_{i-N}(s) +
\nonumber \\
&& + \!\! \frac{\lambda_{i}}{\gamma_{i}} {P}'(\phi_{i-N}(s)) M_{i-N}\phi_{i-N}(s)\! +\!
\frac{1}{2\gamma_{i}}\phi_{j'}(s) \mathcal{J}_{j'i}^{\dagger}\mathcal{J}_{ij}\phi_{j}(s)\! +\!
\frac{\lambda^{2}}{2\gamma_{i}} [{P}'(\phi_{i-N}(s))]^{2} +
\nonumber \\
&& + \left. \left. \frac{\lambda}{\gamma_{i}} {P}'(\phi_{i-N}(s))\mathcal{J}_{ij}\phi_{j}(s) +
\frac{1}{2}\phi_{k}(s) \mathcal{D}_{k,k'}^{-1}(s,s') \phi_{k'}(s')
\right] \Bigg\}\!\! \prod_{s,k}d\phi_{k}(s)\!\!\! \right/ \widetilde{\mathcal{N}} ,
\end{eqnarray}
where, as previously established, $s, s', t, \ldots \in \{\varepsilon, 2\varepsilon, \ldots, \mathfrak{T}\}$; $j,j' \in \{1,2,\ldots,N\}$; $i,i' \in \{N, N+1, \ldots, 2N\}$; $k,k' \in \{1,2, \dots, 2N\}$;   the denominator $\widetilde{\mathcal{N}}$ above is the numerator with $\phi_{\ell_{1}}=\phi_{\ell_{2}} = 1$, and it was introduced to keep normalized the measure $\exp\{\ldots\}\prod d\phi_{k}(s)$: the original measure given by $Z(\tau)d\mu_{\mathcal{C}}$, which appears before
the changes due to time discretization and the simplification of the harmonic part,  is normalized.
%To carry out the polymer expansion ahead, we still take the lattice spacement $\varepsilon = 1$.

The main technical achievement of the integral discrete time representation above is that now, as we aimed, a mathematical investigation starting from suitable measures (non Gaussian distributions) will be possible,
allowing a profitable perturbative analysis. The polymer expansion, described in the
next section, makes it clear.

%%%%%%%%%%%%%%%%%%%%%%%%%%%%%%%%%%%%%%%%%%%%%%%%%%%%%%%%%%%%%%%%%%%%%%%%%%%%%%%%%%%%%%%%%%%%%%%%%%%%%%%%%%%%%%%%%%%%%%%%%%%%%%%%%%%%%%%%%%%%%%%%%%

%%%%%%%%%%%%%%%%%%%%%%%%%%%%%%%%%%%%%%%%%%%%%%%%%%%%%%%%%%%%%%%%%%%%%%%%%%%%%%%%%%%%%%%%%%%%%%%%%%%%%%%%%%%%%%%%%%%%%%%%%%%%%%%%%%%%%%%%%%%%%%%%%%

\section{The polymer expansion} \label{sec:poly}

We now develop a specific polymer expansion \cite{Si,GJ,PS}, suitable for our problem. The technique of polymer (or cluster) expansion is mathematically involved, but it recurrently used in different areas of physics,
such as phase transitions in equilibrium statistical mechanics, spectral analysis in field theory, etc. Detailed reviews may be found in textbooks such as Refs.\cite{Si, GJ, B}.
The existence of a convergent polymer expansion for our present system will allow us
to obtain the precise decay of the correlation functions, that is of crucial importance for the study of heat flow in a system with interparticle interactions
beyond nearest-neighbor sites \cite{PA}. More importantly, a convergent polymer expansion establishes a rigorous support for a perturbative approach for the heat flow investigation, as repeatedly emphasized.

To start the polymer expansion, we need to properly rewrite, reorganize the integral formalism. And so, before describing the technical structures, let us stress the reason of such rearrangement: in these problems involving systems with anharmonic interactions whose behavior is quite different
from that observed in related system but with harmonic interactions only (as we have in the heat flow problem for chains of oscillators), a perturbative analysis of the anharmonic model within an integral representation starting with a Gaussian measure (which comes from the harmonic interaction part only) is doomed to failure. For such reason, we are forced to rewrite the integral representation for the correlations in terms of non Gaussian measures: in
our case, suitable anharmonic single spin distributions to be described ahead. And so, a perturbative analysis  makes sense now: that is, within an integral formalism involving a measure with enough information about the anharmonic potential, we have a suitable starting point, and so, the complete result is reached by adding (now) small corrections to this anharmonic part, corrections which generate
a convergent perturbative series.

The notation to be used here and in the following sections is somehow intricate, but is usual in works involving polymer expansion theory, see e.g.
Ref.\cite{PS} and references there in.

We first consider the term which we name as partition function $Z_{\Lambda}$ (as usually named in theory of polymers), that is, the denominator of the two-point function above (\ref{principal}) (i.e. the numerator with $\phi_{\ell_{1}}=\phi_{\ell_{2}} = 1$),
and rewrite it in terms of polymers. But, instead of an usual single spin distribution,
we take as local distributions the measures associated to ``cells'' of $\psi_{x}$, where  $\psi_{x} = (q_{x}, p_{x})$, with
$$q_{x} = \lambda^{1/3}\phi_{\vec{x}}(x_{0}) ~~~~ {\rm and} ~~~~ p_{x} = \phi_{\vec{x} + N}(x_{0}+\varepsilon),$$
$(x_{0},\vec{x}) \in \Lambda = \left\{\varepsilon, 2\varepsilon, \ldots, \mathfrak{T}\right\}\times\left\{1, 2, \ldots, N\right\} \subset \mathbb{Z}^{2}_{*} \equiv \varepsilon\mathbb{Z}\times\mathbb{Z}$.
We remark that, for clearness in the forthcoming manipulation with polymers,  a new notation was introduced above: we replaced
 the previous time and index notations $t, i, j, \ldots$ for $(x_{0},\vec{x}) \equiv x$; $x_{0}$ for time, and $\vec{x}$ for space. When necessary, we will also split the parts of $\psi$ as $q$ and $p$. Note that our basic ``cell'' to be used in the polymer expansion involves $\psi_{x}$, in which $q$ and $p$ are in the same site (i.e., in the same space position $\vec{x}$), but they are nearest neighbors in time.

As the local measure, we define (for $x_{0} \neq \varepsilon$ or $\mathfrak{T}$)
\begin{equation}
\label{dnu}
d\n(\psi_{x})= {{e^{- U(\psi_{x})} \over C_{x}} d\psi_{x}},~~~~~~~
C_{x}= \int e^{- U(\psi_{x})} d\psi_{x},
\end{equation}
and
\begin{equation}
U_{x} = \varepsilon \gamma_{x}^{-1}\left(\frac{1}{2}q_{x}^{6} + q_{x}^{3}p_{x} + [M + 2\zeta c_{1}]p_{x}^{2} + \lambda^{-1/3}Mq_{x}^{4}\right) .
\end{equation}
To be precise, i.e., to take care of details for the sites at the ends, for $x_{0}= \varepsilon$ and $x_{0} = \mathfrak{T}$ we define
\begin{eqnarray}
d\n(\psi_{x}(x_{0}=\varepsilon)) &=&  \exp\left\{-U_{x}(x_{0}=\varepsilon) -
\varepsilon\gamma_{x}^{-1}[M + 2\zeta c_{1}]p_{\vec{x}}^{2}(x_{0}=\varepsilon)\right\} \times
\nonumber \\
& &\times d\psi_{x}(x_{0}=\varepsilon)dp_{\vec{x}}(x_{0}=\varepsilon)/{\rm normalization} ~,
\nonumber \\
d\n(\psi_{x}(x_{0}=\mathfrak{T})) &=&  \exp\left\{ -\varepsilon\gamma_{x}^{-1}\left[q_{\vec{x}}^{6}(x_{0}=\mathfrak{T}) +
\lambda^{-1/3}Mq_{\vec{x}}^{4}(x_{0}=\mathfrak{T})\right]\right\} \times
\nonumber \\
& & \times dq_{\vec{x}}(x_{0}=\mathfrak{T})/{\rm normalization}~.
\end{eqnarray}

Hence, see Eq.(\ref{principal}), the partition function $Z_{\Lambda}$ is given by
\begin{equation}
Z_\La = C^{\Lambda}\prod_{x\in \La}\left( \int d\nu(\psi_x) \prod_{\{x,y\}\subset \La}
e^{G_{xy}(\psi_x, \psi_y)}\right) ,
\end{equation}
where, $C^{\Lambda} \equiv \prod_{x\in\Lambda}C_{x}$, and $G_{x y} = -\sum_{k=1}^{6}G^{(k)}_{x y}$,
\begin{eqnarray}
& &  G^{(1)}_{x y} = A^{(1)}_{x y}q_x p_y = \varepsilon\g^{-1}_x J_{\vec{x} \vec{y}}(1-\d_{\vec{x} \vec{y}}) \l^{-1/3} \d_{x_0, y_0} q_x p_y ~,
~ G^{(2)}_{x y} = A^{(2)}_{x y}q_x q_y = \varepsilon\g^{-1}_x J_{\vec{x} \vec{y}}(1-\d_{\vec{x} \vec{y}}) \l^{-2/3} M \d_{x_0, y_0} q_x q_y ~, \nonumber \\
& &  G^{(3)}_{x y} = A^{(3)}_{x y}q_x q_y = \varepsilon\sum_{\vec{k}\atop \vec{k}\neq\vec{x},\vec{y}} \frac{\g^{-1}_k  \l^{-2/3}}{4} J_{\vec{x} \vec{k}} J_{\vec{k} \vec{y}} \d_{x_0, y_0} q_x q_y ~,
~ G^{(4)}_{x y} = A^{(4)}_{x y}q_x^3 q_y = \varepsilon\g^{-1}_x J_{\vec{x} \vec{y}}(1-\d_{\vec{x} \vec{y}}) \l^{-1/3} \d_{x_0, y_0} q_x^3 q_y ~, \nonumber \\
 & &  G^{(5)}_{x y} = A^{(5)}_{x y}q_x q_y = 2\varepsilon\g^{-1}_x \l^{-2/3} M \d_{\vec{x} \vec{y}} [\d_{x_0, y_0} - c_1(\D(x_0, y_0))] q_x q_y ~,
~ G^{(6)}_{x y} = A^{(6)}_{x y}p_x p_y = -\varepsilon\g^{-1}_x \zeta c_1 \d_{|x_0-y_0|,\varepsilon} \d_{\vec{x} \vec{y}}p_{x}p_{y} .\nonumber
\end{eqnarray}

We rewrite the partition function  as
\begin{equation}
Z_\La = C^{\La}\prod_{x\in \La} \int d\nu(\psi_x) \prod_{\{x,y\}\subset \La}
\left(e^{G_{xy}(\psi_x, \psi_y)}-1+1\right)= C^{\La}\Xi_\La, \label{cluster1}
\end{equation}
with
\begin{equation}
\Xi_\La= 1 +
\sum_{n\geq 1}{1\over n!}
\sum_{R_1, \dots, R_n\subset\La\atop
R_i\cap R_j=\emptyset,~|R_i|\ge 2}
 \rho(R_1)\dots  \rho(R_n), \label{cluster2}
\end{equation}
where $R_1, \dots R_n\subset\L$ is a collection of subsets
of $\L$ with cardinality greater than 1,
with associated activities $\r(R)$ given by
\be
\label{rhoR}
\rho(R)= \prod_{x\in R}\int d\nu(\psi_x)
\sum_{g\in G_{R}}
\prod_{\{x,y\}\subset \La}(e^{ G_{xy}(\psi_x, \psi_y)}-1),
\ee
where $\sum_{g\in G_{R}}$ is the sum over the connected
graphs on the set $R$. Given a finite set $A$, we define a graph $g$ in A
as a collection $\{\g_1,\dots,\g_m\}$ of distinct pairs of $A$, i.e.,
$\g_i=\{x_i,y_i\}\subset A$ with $x_i\neq y_i$.
A graph $g=\{\g_1,\dots,\g_m\}$ in $A$ is connected
if for any $B$,$C$ of subsets of $A$ such that $B\cup C=A$ and
$B\cap C =\emptyset$, there is a $\g_i\in g$ such that
$\g_i \cap B\ne\emptyset$ and $\g_i \cap C\ne\emptyset$.
The pairs $\g_i$ are called links of the graph. We denote by $|g|$ the
number of links in $g$.

From the standard polymer theory \cite{Si,GJ}, it follows that we can expand $\log\Xi_\L$ as
\be
\label{LogT}
\log\Xi_\L=
\sum_{n\geq 1}{1\over n!}
\sum_{R_1, \dots R_n\subset\L\atop
|R_i|\ge 2}\phi^T(R_1 ,\dots R_n)
\r(R_1)\dots  \r(R_n),
\ee
with
\ben
\label{phiT}
\phi^{T}(R_{1},\dots ,R_{n})=\left\{
\begin{array}{lcl}
1 & , & \mbox{ if } n=1,
\\[0.2cm]
\sum_{f\in G_{n}\atop f\subset g(R_1 ,\dots ,R_n)}(-1)^{|f|} & , & \mbox{ if } n\ge 2 \mbox{ and }
g(R_1 ,\dots ,R_n)\in G_n,
\\[0.3cm]
0 & , & \mbox{ if } g(R_1 ,\dots ,R_n)\notin G_n,
\end{array}\right.
\een
%with
%\ben
%\label{phiT}
%\phi^{T}(R_{1},\dots ,R_{n})=\left\{\begin{array}{l}1, ~{\rm if}~ n=1,\\
%\sum_{f\in G_{n}\atop
%f\subset g(R_1 ,\dots ,R_n)}(-1)^{|f|},~ {\rm if} ~ n\ge 2 ~{\rm and}~
%g(R_1 ,\dots ,R_n)\in G_n, \\
%0, ~ {\rm if}~ g(R_1 ,\dots ,R_n)\notin G_n,
%\end{array}\right.
%\een
where $G_n$ above denotes the set of the connected graphs on $\{1,\dots ,n\}$,
and $g(R_1 ,\dots ,R_n)$ denotes the
graph in $\{1,2,\dots ,n\}$ which has the link $\{i,j \}$ if
and only if
$R_i\cap Rj\neq \emptyset$.

The connection between the polymer expansion and a perturbative series is clear from the expressions (\ref{cluster1}, \ref{cluster2}, \ref{LogT}) above. To make it explicit and transparent, note that if we
write $G_{xy}$ as $\beta G_{xy}$, then $\rho = \mathcal{O}(\beta)$ (it involves $\exp[\beta G_{xy}] -1$), and the polymer expansions above give us a power series in $\beta$.

A sufficient condition for the convergence of the polymer series (\ref{LogT}), uniformly in $\L$, is given by the well known result due to Koteck\'y and Preiss:
\begin{lemma} \label{lemma:KP}
If there is $a>0$ such that
\begin{equation}
\sup_{x\in \L}\sum_{x \in R \subset \Lambda} |\r(R)|e^{a|R|}<a, \label{CondConv}
\end{equation}
then
\begin{equation}
| \log\Xi_\Lambda | \leq \sum_{n\geq 1}{1\over n!}
\sum_{R_1, \dots R_n\subset\L\atop
|R_i|\ge 2} \left|\phi^T(R_1 ,\dots R_n)
\r(R_1)\dots  \r(R_n)\right| \leq E_{a}|\Lambda| ,
\end{equation}
where $E_{a}$ does not depend on $\Lambda$.
\end{lemma}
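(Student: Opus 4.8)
The plan is to prove the estimate by the now-standard Kotecký–Preiss argument: bound the truncated series (\ref{LogT}) by a sum of \emph{pinned} clusters, control each Ursell coefficient $\phi^{T}$ by a tree-graph inequality, and close the resulting estimate with a self-consistent bound fed by hypothesis (\ref{CondConv}). The first inequality asserted in the lemma is just the triangle inequality applied to (\ref{LogT}), so the substance is the second one. Throughout, the finiteness of $\Lambda$ guarantees that all sums are finite, so the manipulations and the induction below are justified term by term; the content of the lemma is that the resulting bound is \emph{uniform} in $\Lambda$.

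First I would record the combinatorial input. The coefficient $\phi^{T}(R_{1},\dots,R_{n})$ vanishes unless the intersection graph $g(R_{1},\dots,R_{n})$ is connected, so only connected clusters survive in (\ref{LogT}). On such configurations the Penrose/Rota tree-graph inequality yields
\begin{equation*}
|\phi^{T}(R_{1},\dots,R_{n})| \leq \sum_{T\in\mathcal{T}_{n}}\ \prod_{\{i,j\}\in T}\mathbf{1}[R_{i}\cap R_{j}\neq\emptyset],
\end{equation*}
where $\mathcal{T}_{n}$ is the set of trees on the vertices $\{1,\dots,n\}$. This replaces the signed alternating sum defining $\phi^{T}$ by a manifestly nonnegative sum over spanning trees, which is what makes the subsequent resummation possible.

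Second, I would reduce the claim to a pinned estimate. Writing
\begin{equation*}
\Phi(R_{0}) = \sum_{n\geq 0}\frac{1}{n!}\sum_{R_{1},\dots,R_{n}\subset\Lambda}|\phi^{T}(R_{0},R_{1},\dots,R_{n})|\,\prod_{k=1}^{n}|\rho(R_{k})|,
\end{equation*}
one has $|\log\Xi_{\Lambda}|\leq\sum_{R_{0}\subset\Lambda}|\rho(R_{0})|\,\Phi(R_{0})$, so it suffices to show $\Phi(R_{0})\leq e^{a|R_{0}|}$. Granting this, and bounding $\sum_{R_{0}}(\cdots)\leq\sum_{y\in\Lambda}\sum_{R_{0}\ni y}(\cdots)$, hypothesis (\ref{CondConv}) gives $\sum_{R_{0}\ni y}|\rho(R_{0})|e^{a|R_{0}|}\leq a$ for each $y$, whence $|\log\Xi_{\Lambda}|\leq a|\Lambda|$, i.e. $E_{a}=a$.

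The heart of the proof, and the step I expect to be the main obstacle, is establishing $\Phi(R_{0})\leq e^{a|R_{0}|}$. Here I would root the trees produced by the tree-graph inequality at $R_{0}$ and assign each link incident to $R_{0}$ a site $z\in R_{0}$ in the corresponding intersection, using $\mathbf{1}[R_{i}\cap R_{j}\neq\emptyset]\leq\sum_{z\in R_{i}\cap R_{j}}1$. Since the subtrees hanging off distinct children are disjoint, the sum factorizes over the sites of $R_{0}$, and resumming the factorials produces the self-consistent inequality
\begin{equation*}
\Phi(R_{0})\leq \exp\!\left(\sum_{z\in R_{0}}\ \sum_{R\ni z}|\rho(R)|\,\Phi(R)\right).
\end{equation*}
Feeding in the inductive ansatz $\Phi(R)\leq e^{a|R|}$ on the right and using (\ref{CondConv}) in the form $\sum_{R\ni z}|\rho(R)|e^{a|R|}\leq a$ collapses the exponent to $\sum_{z\in R_{0}}a=a|R_{0}|$, reproducing the ansatz and closing the induction on the maximal number of polymers (finite for finite $\Lambda$). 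The delicate points are justifying the factorization over the root's sites together with the correct bookkeeping of the $1/n!$ and the multiplicities of links incident to a common site; once these are handled the uniform bound $E_{a}|\Lambda|$ follows, which is exactly the uniform convergence of the cluster expansion claimed in the lemma.
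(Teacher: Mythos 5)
Your outline is correct: it is the standard Koteck\'y--Preiss argument via the Rota/Penrose tree-graph bound on $\phi^{T}$, the pinned quantity $\Phi(R_{0})$, and the inductive closure $\Phi(R)\leq e^{a|R|}$ fed by the hypothesis, yielding $E_{a}=a$. The paper itself supplies no proof for this lemma --- it simply cites Koteck\'y--Preiss, Brydges, and Cammarota --- and your sketch is precisely the argument contained in those references, so there is nothing to compare beyond noting that you have filled in what the paper delegates to the literature.
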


\begin{proof}
See Refs.\cite{KP,B,C}.
\end{proof}

In relation to the two-point correlation (\ref{principal}), we can rewrite it as
\be
\label{S2x1x2}
S_2(x_1;x_2)=
{\partial^2\over
\partial\a_1\partial\a_2}
\log \left.\tilde \Xi_\L(\a_1 ,\a_2)\right|_{\a=0},
\ee
with
\be
\tilde \Xi_\L(\a_1 ,\a_2)=\prod_{x\in \L}\int d\n(\psi_x)
e^{ G_{xy}(\psi_x, \psi_y)}
(1+\a_1 \psi_{x_1}^{(c)})(1+\a_2 \psi_{x_2}^{(c)}),
\ee
where $\psi_{x}^{(c)} = q_{\vec{x}}(x_{0})$ or $p_{\vec{x}}(x_{0})$.
Note that $\tilde \Xi_\L(\a_1=0 ,\a_2=0)=\Xi_\L$. Again, we
expand $\tilde \Xi_\L(\a_1,\a_2)$ in terms of polymers.
For any $R\subset\L$, we denote by $I_R$ the subset
(possibly empty) of $\{1,2\}$ such that $i\in I_R$ iff
$x_i\in R$, where $i=1,2$.
We have
\be
\label{XiL}
\Xi_\L(\a_1,\a_2)=1+\sum_{n\geq 1}{1\over n!}
\sum_{{R_1, \dots R_n\subset\L\atop
R_i\cap R_j=\emptyset~|R_i|\ge 1}}
\tilde \r(R_1,\a)\dots \tilde \r(R_n,\a),
\ee
where
\begin{equation}
\tilde \r(R,{ \a})= \left\{
\begin{array}{lcl}
\prod_{x\in R}\int d\n(\psi_x)
\prod_{i\in I_R}(1+\a_i\psi_{x_i}^{(c)})
\sum\limits_{g\in G_{R}}
\prod\limits_{\{ x,y\}\in g}(e^{ G_{xy}(\psi_x,\psi_y)}-1)
& , & \mbox{ for } |R |\geq 2,
\\[0.2cm]
\prod_{x\in R}\int d\n(\psi_x)
\prod\limits_{i\in I_R}\a_i \psi_{x_i}^{(c)} & , & \mbox{ for } I_R\neq \emptyset, |{R}|=1,
\\[0.2cm]
0 & , & \mbox{ for } I_R=\emptyset, |R|=1.
\end{array}
\right.
\end{equation}

%where
%\begin{equation}
%\tilde \r(R,{ \a})= \left\{\begin{array}{l}
%\prod_{x\in R}\int d\n(\psi_x)
%\prod_{i\in I_R}(1+\a_i\psi_{x_i}^{(c)})
%\sum\limits_{g\in G_{R}}
%\prod\limits_{\{ x,y\}\in g}(e^{ G_{xy}(\psi_x,\psi_y)}-1),
%\\
%\qquad\qquad\qquad\qquad\qquad\qquad\qquad\qquad\qquad\qquad\quad
%\mbox{for } |R |\geq 2,
%\\
%\\
%\prod_{x\in R}\int d\n(\psi_x)
%\prod\limits_{i\in I_R}\a_i \psi_{x_i}^{(c)},
%~~~~~~~\mbox{for } I_R\neq \emptyset, |{R}|=1,
%\\
%0, ~~~~~~~~~~~~~~~~~~~~~~~~~~~
%~~~~~~~~~~~\mbox{for } I_R=\emptyset, |R|=1.
%\end{array}\right.
%\end{equation}

It is important to note that the one-body polymers $R=\{x\}$ can also contribute to the
partition function (\ref{XiL}), but only if $x=x_i$ for some
$i\in \{1,2\}$.

If we take the $\log$ of (\ref{XiL}) and
note that only the terms proportional to
$\a_1\a_2$ give a non vanishing
contribution to the two-point truncated correlation function, we obtain
\be
\label{SerieS2}
S_2(x_1;x_2)=\sum_{n\geq 1}{1\over n!}\sum_{i_1,i_2=1}^{n}
\sum_{R_1, \dots R_n\subset\L,~|R_j|\ge 2\atop R_{i_1}\ni x_1
R_{i_2}\ni x_2}
\phi^T(R_1 ,\dots R_n)
\tilde \r(R_1)\dots \tilde \r(R_n),
\ee
where
\begin{eqnarray}
\label{rhotil}
\tilde \r(R_i)&=&
\prod_{x\in R_i}\int d\n(\psi_x)
\left[\left(\psi_{x_1}^{(c)}\right)^{\b_i^{1}}+\b_i^{1}\ck_{1}\right]\left[\left(\psi_{x_2}^{(c)}\right)^{\b_i^{2}}+\b_i^{2}\ck_{2}\right]
\times \\
& & \times
\sum\limits_{g\in G_{R}}
\prod\limits_{\{ x,y\}\in g}(e^{  G_{xy}(\psi_x, \psi_y)}-1),
\nonumber
\end{eqnarray}
with $\b_i^j = 0$ if $i\neq i_j$, or $1$ if $i= i_j$ and
$\ck_{k}=\int d\n(\psi)\psi^{(c)}_{x_{k}}$.
Due to the fact that $R_1,\dots ,R_n$ must
be connected, the
one-body polymers are absorbed in the activity
of the many body polymers in the terms proportional to $\ck$ above.
 And so, each 1-body polymer (if any)
is always contained in, at least, one many-body polymer.

%%%%%%%%%%%%%%%%%%%%%%%%%%%%%%%%%%%%%%%%%%%%%%%%%%%%%%%%%%%%%%%%%%%%%%%%%%%%%%%%%%%%%%%%%%%%%%%%%%%%%%%%%%%%%%%%%%%%%%%%%%%%%%%%%%%%%%%%%%%%%%%%%%%

\section{Convergence of the polymer expansion} \label{sec:conv}

We will describe, in detail, the case in which the interparticle potential
$J_{\vec x,\vec y}$ has an integrable
 polynomial decay (cases with exponential decay or with finite range can be treated in a similar, but easier,  way).
Precisely, we assume here that, for $\vec{x}\neq\vec{y}$,
\begin{equation}
{J'\over |\vec x - \vec y|^{p}} \leq J_{\vec x,\vec y} \leq  {J\over |\vec x - \vec y|^{p}},
\end{equation}
where $J, J'$ are real constants and  $p\ge 1+ \epsilon$ with $\epsilon >0$.

In what follows, we assume the regime of large dissipation, that means $\zeta$ large (and so,
large $\alpha= \zeta/2$ and large harmonic pinning constant $M= 3\alpha^2$, see appendix \ref{apendice}),
and, more importantly,  we also assume the regime of large anharmonicity, i.e., at the end we take $\lambda$ as large
as necessary.

Our strategy is to prove that the Koteck\'y-Preiss's condition (\ref{CondConv}) is satisfied by our specific polymer expansion.
But, before carrying out any computation, in order to control
 the exaggerated number of graphs that appears in the expression for the activity  $\r(R)$, we recall an important and well known result
 (to be used ahead), namely, the  Brydges-Battle-Federbush tree graph inequality:

\begin{lemma}
\label{3graph}
Let $R$ be a finite set with cardinality $|R|$ and
let $\{V_{xy}:\{x,y\} \subset R\}$ be a set with $|R|(|R|-1)/2$ real numbers
(precisely, $\{x,y\}$ are unordered pairs in $R$).
Suppose that there exist $|R|$ positive numbers
$V_{x}$ (with $x\in R$)
such that, for any subset $S\subset R$,
\be
\left[ \sum_{x\in S}V_{x}+\sum_{\{x,y\}\in S}V_{xy}\right] \geq 0.
\ee
Then
\be
\left|\sum_{g\in G_R}\prod_{\{x,y\}\in g}
\left(e^{-V_{xy}}-1\right)\right|\leq
e^{\sum_{x\in R}V_{x}}
\sum_{\t\in T_R}\prod_{\{x,y\}\in \t} |V_{xy}|,
\ee
where $T_R$ denotes the set of the tree graphs on $R$.
\end{lemma}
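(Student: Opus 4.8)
The plan is to convert the sum over connected graphs into a sum over spanning trees, each weighted by an integral of an exponential over interpolation parameters, and then to dominate that exponential using the stability hypothesis. Writing $V_\ell=V_{xy}$ for a link $\ell=\{x,y\}$ and $n=|R|$, I would first establish the Brydges--Kennedy--Battle--Federbush tree representation
\[
\sum_{g\in G_R}\prod_{\{x,y\}\in g}\bigl(e^{-V_{xy}}-1\bigr)
=\sum_{\tau\in T_R}\Bigl(\prod_{\{x,y\}\in\tau}(-V_{xy})\Bigr)
\int_{[0,1]^{n-1}}\! d\mathbf{s}\;
\exp\Bigl(-\sum_{\{x,y\}\subset R}V_{xy}\,t_{xy}(\tau,\mathbf{s})\Bigr),
\]
where, for every pair, $t_{xy}(\tau,\mathbf{s})\in[0,1]$ equals the smallest of the parameters $s_\ell$ along the unique path in the tree $\tau$ joining $x$ to $y$ (for a tree link this is just its own parameter $s_{\{x,y\}}$). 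The product of explicit factors $-V_{xy}$ runs only over the links of $\tau$.

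To obtain this representation I would proceed by induction on $n$, using the elementary identity $e^{-V}-1=-\int_0^1 V e^{-sV}\,ds$ together with the decoupling scheme of Brydges and Kennedy: one introduces a coupling parameter that interpolates between the fully coupled system and a system in which a chosen vertex is isolated, and the fundamental theorem of calculus in this parameter produces, at each step, one explicit factor $-V_{xy}$ (the new tree link) and an integral over the decoupling parameter, while the remaining pair couplings are frozen at the minimum of the parameters encountered along the path. This is precisely what generates the weights $t_{xy}$ and restricts the surviving graphs to trees.

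The decisive step, and the one I expect to be the main obstacle, is to exploit the stability hypothesis $\sum_{x\in S}V_x+\sum_{\{x,y\}\subset S}V_{xy}\ge 0$ to bound the interpolated energy. Here I would use the hierarchical (ultrametric) structure of the weights: for fixed $(\tau,\mathbf{s})$ one writes $t_{xy}=\int_0^1 \chi\bigl[x\sim_h y\bigr]\,dh$, where $x\sim_h y$ means that $x$ and $y$ are connected in the subforest $\{\ell\in\tau:s_\ell>h\}$, so that the $\sim_h$-blocks only merge as $h$ decreases and form a nested family of partitions $\mathcal{P}_h$ of $R$. Consequently
\[
\sum_{\{x,y\}\subset R}V_{xy}\,t_{xy}
=\int_0^1\!\Bigl(\sum_{B\in\mathcal{P}_h}\ \sum_{\{x,y\}\subset B}V_{xy}\Bigr)\,dh
\ \ge\ \int_0^1\!\Bigl(-\sum_{B\in\mathcal{P}_h}\sum_{x\in B}V_x\Bigr)\,dh
=-\sum_{x\in R}V_x,
\]
where the inequality applies the hypothesis to each block $B\subset R$ and uses that the blocks partition $R$. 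Hence the exponential in the tree representation is bounded by $e^{\sum_{x\in R}V_x}$ uniformly in $(\tau,\mathbf{s})$. Taking absolute values, using $\int_{[0,1]^{n-1}}d\mathbf{s}=1$, and summing over $\tau\in T_R$ then gives
\[
\Bigl|\sum_{g\in G_R}\prod_{\{x,y\}\in g}(e^{-V_{xy}}-1)\Bigr|
\le e^{\sum_{x\in R}V_x}\sum_{\tau\in T_R}\prod_{\{x,y\}\in\tau}|V_{xy}|,
\]
which is the claim. The genuine work is concentrated in verifying the interpolation identity and, above all, in identifying the nested structure of the $t_{xy}$ that makes the block-by-block application of the stability bound legitimate; once that structure is in hand the remaining estimate is immediate.
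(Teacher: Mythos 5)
Your proposal is correct, but note that the paper itself offers no proof of this lemma: it simply cites Brydges' Les Houches notes and Battle--Federbush, so there is no in-paper argument to compare against. What you have written is the standard Brydges--Kennedy interpolation proof that those references contain: the spanning-tree representation with weights $t_{xy}(\tau,\mathbf{s})$ given by the minimum of the $s_\ell$ along the tree path, followed by the ultrametric decomposition $t_{xy}=\int_0^1\chi[x\sim_h y]\,dh$ and the block-by-block application of the stability hypothesis. The part of the argument that actually uses the hypothesis of the lemma --- applying $\sum_{\{x,y\}\subset B}V_{xy}\geq -\sum_{x\in B}V_x$ to each block $B$ of the nested partitions $\mathcal{P}_h$ and summing over the partition to recover $-\sum_{x\in R}V_x$ --- is carried out correctly, and the final estimate ($\int_{[0,1]^{n-1}}d\mathbf{s}=1$, then sum over $\tau$) is immediate. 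The only incomplete piece is the interpolation identity itself, which you assert by appeal to the Brydges--Kennedy decoupling induction rather than verify; you flag this honestly, and it is a standard (if somewhat laborious) computation, so I regard the proposal as a sound proof outline rather than a gap.
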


\begin{proof}
See Refs.\cite{B,BF}.
\end{proof}

To ensure the Koteck\'y-Preiss's condition (\ref{CondConv}),  we will bound the factor
\be
\e_n(z,z') = \sum_{R\subset \La: |R|=n\atop z,z' \in R}|\r(R)|,
\ee
and show that $\e_n(z,z')\le [f(J, \l^{-1}, c_1)]^n$,
where $f(J, \l c_1)\to 0$ as
$\l^{-1}, J, c_1,\to 0$.

Note that for $\varphi, \psi \in \R$, $x \in \La$, we have
\ben
&& 2|\varphi||\psi| \leq \varphi^2 + \psi^2 ~,~~
 |\varphi|^{a}|\psi|^b \leq |\varphi|^{a+b} + |\psi|^{a+b} ~~ {\rm for}~ a,b \geq 0,\\
&&  \sum_{y \in \La} \d_{|x_0-y_0|,\varepsilon} \d_{\vec{x} \vec{y}} \leq \sum_{y_0 \in \Z} \d_{|x_0-y_0|,\varepsilon} \leq 2 ~,~~
 \sum_{y \in \La} J_{\vec{x} \vec{y}} \d_{x_0, y_0} \leq \sup_{\vec{x} \in \Z} \sum_{\vec{y} \in \Z}|J_{\vec{x} \vec{y}}| \leq J_{M} ,\\
&&  \sum_{k \in \Z\atop \vec{k}\neq\vec{x},\vec{y}} J_{\vec{x} \vec{k}}J_{\vec{k} \vec{y}} \leq \sum_{k \in \Z\atop \vec{k}\neq\vec{x},\vec{y}} \frac{J}{|\vec{x}-\vec{k}|^p}\frac{J}{|\vec{k}-\vec{y}|^p} \leq \frac{J^2 \mathcal{O}(1)}{|\vec{x}-\vec{y}|^p}(1-\d_{\vec{x},\vec{y}}) .
\een
By using the bound $\g^{-1}_x \leq \g^{-1} = (2 \z  T_{min})^{-1}$, where $T_{min} = \min_{x}\{T_x\}$, we obtain
\be
 \left|\sum_{\{x,y\} \subset R}G^{(k)}_{x y}\right| &\leq& \sum_{\{x,y\} \subset R}|G^{(k)}_{x y}| \nonumber \\
 &\leq& \sum_{\{x,y\} \subset R}|A^{(k)}_{x y}|\varphi_x|^{a}|\psi_y|^{b}
\leq \sum_{\{x,y\} \subset R}|A^{(k)}_{x y}| (|\varphi_x|^{a+b}+|\psi_y|^{a+b})
\nonumber \\
 &\leq& \sum_{x \in R}(|\varphi_x|^{a+b}+|\psi_x|^{a+b}|) \sum_{y \in R}|A^{(k)}_{x y}|.
\ee
Hence,
\ben
&& \left|\sum_{{x,y} \in \La}G^{(1)}_{x y}\right| \leq \sum_{x \in R} \varepsilon \g^{-1}J_M\l^{-1/3}\frac{q_x^2+p_x^2}{2} ,
~~ \left|\sum_{{x,y} \in \La}G^{(2)}_{x y}\right| \leq \sum_{x \in R} \varepsilon M\g^{-1}J_M\l^{-2/3}q_x^4 ,
\\
&& \left|\sum_{{x,y} \in \La}G^{(3)}_{x y}\right| \leq \sum_{x \in R} \varepsilon \g^{-1}J_M^2\l^{-2/3}\mathcal{O}(1)q_x^2 ,
~~ \left|\sum_{{x,y} \in \La}G^{(4)}_{x y}\right| \leq \sum_{x \in R} 2\varepsilon \g^{-1}J_M^2\l^{-1/3}q_x^4 ,
\\
&& \left|\sum_{{x,y} \in \La}G^{(5)}_{x y}\right| \leq \sum_{x \in R} \varepsilon M\g^{-1}J_M^2\l^{-2/3}(1+4c_1)\frac{q_x^2}{2} ,
~~ \left|\sum_{{x,y} \in \La}G^{(6)}_{x y}\right| \leq \sum_{x \in R} 2 \varepsilon \g^{-1}_x \zeta c_1p_x^2.
\een
And so,
\be
\left|\sum_{\{x,y\} \subset R}G_{x y}\right| \leq \sum_{x \in R} \mathcal{P}(q_x,p_x) ,
\ee
where $\mathcal{P}(q_x,p_x)$ is a polynomial of degree 4 in $q_x$ and 2 in $p_x$, and it is bounded from below. Hence, there are constants $C_1, C_2$ and $C_3$ depending on $\varepsilon, \l, J, M, \g$ such that $\mathcal{P}(q_x,p_x) \leq C_1 q_x^4 + (C_2+2\varepsilon\g^{-1}_x \zeta c_1) p_x^2 + C_3$. By using the Brydges-Battle-Federbush tree graph inequality, we get
\be
\left|\sum_{g\in G_{R}}
\prod_{\{x,y\}\in g}(e^{G_{xy}}-1)\right|
\le
\prod_{x\in R}e^{\mathcal{P}(q_x,p_x)}
\sum_{\t\in T_{R}}
\prod_{\{x,y\}\in \t}|G_{xy}|.
\ee
Consequently,
\ben
 \lefteqn{\sum_{R\subset \La: |R|\ge 2\atop z,z' \in R}|\rho(R)| e^{|R|}
 = \sum_{n\ge 2} e^n \sum_{R\subset \La:|R|=n\atop z,z' \in R}|\rho(R)|} \\
 &=& \sum_{n\ge 2} \frac{e^n}{(n-2)!}
 \sum_{x_1,\dots,x_n \in \La:\atop x_1=z,x_2=z',x_i\ne x_j}
 \!\!\!\!\!\!\! |\rho(R=\{x_1,\dots,x_n\})|
\\
 &\le& \sum_{n\ge 2} \frac{e^n}{(n-2)!}
 \sum_{x_1,\dots,x_n \in \La:\atop x_1=z,x_2=z',x_i\ne x_j}
  \int \prod_{i=1}^{n} d\nu(\psi_{x_i}) e^{\mathcal{P}(q_x,p_x)}
 \sum_{\tau\in T_n}
 \prod_{\{i,j\}\in \tau}|G_{{x_i}{y_j}}|.\\
\een
Recall now that $|\tau|=n-1$. Hence, fixing $\tau \in T_n$, we have
\be
\prod_{\{i,j\}\in \tau}| G_{{x_i}{y_j}}|
&\le& \prod_{\{i,j\}\in \tau} \sum_{s=1}^{6} |A_{x_i x_j}^{(s)}|
|q_{x_i}^{a^{(i)}_{s}}||p_{x_i}^{b^{(i)}_{s}}||q_{x_i}^{a^{(j)}_{s}}||p_{x_j}^{b^{(j)}_{s}}| \nonumber\\
&\le& \sum_{\{i,j\}\in \tau} \sum_{s_{ij}=1}^{6}
\prod_{k=1}^{n}|q_{x_k}|^{n_k(s)}|p_{x_k}|^{m_k(s)}
\prod_{\{i,j\}\in \tau}|A_{x_i x_j}^{(s_{ij})}|,
\ee
where $\{s_{ij}\}_{\{i,j\}\in \tau}$ is a sequence of possible choices of $s$'s (from 1 to 6), for each line $\{i,j\} \in \tau$. The exponents $n_k(s)$ and $m_k(s)$ depend on such sequence, and depend also on the exponents $a_s$ and $b_s$ of $q_{x_i}$ and $p_{x_i}$ in each $G^{(s)}_{x y}$. In any case, we have the bounds $0 \leq n_k(s) \leq d_k \cdot \max\{a_s\}  \leq 3d_k$ and $0 \leq m_k(s) \leq d_k \cdot \max\{b_s\} \leq d_k$, where $\{d_k\}_{k=1}^{n}$ are the incidence indices of the tree $\tau \in T_n$, with $1\le d_k\le n-1$ and $\sum_{k=1}^{n}d_k=2n-2$. Then
\be
&& \sum_{R\subset \La: |R|\ge2\atop z,z' \in R}|\rho(R)| e^{|R|} \leq
   \sum_{n \ge 2} \frac{e^n}{(n-2)!}
 \sum_{x_1,\dots,x_n \in \La:\atop x_1=z,x_2=z',x_i\ne x_j}
 \int \prod_{i=1}^{n} d\nu(\psi_{x_i}) e^{\mathcal{P}(q_{x_i},p_{x_i})} \times
\nonumber\\
&& \times  \sum_{\tau\in T_n}
 \sum_{\{i,j\}\in \tau} \sum_{s_{ij}=1}^{6}
 \prod_{k=1}^{n}|q_{x_k}|^{n_k(s)}|p_{x_k}|^{m_k(s)}
 \prod_{\{i,j\}\in \tau}|A_{x_i x_j}^{(s_{ij})}| \leq
\nonumber\\
&& \le \sum_{n \ge 2} \frac{e^n}{(n-2)!}
 \sum_{x_1,\dots,x_n \in \La:\atop x_1=z,x_2=z',x_i\ne x_j}
  \sum_{\tau\in T_n}
 \sum_{\{i,j\}\in \tau} \sum_{s_{ij}=1}^{6}
\nonumber\\
&& \prod_{k=1}^{n} \left(
 \int d\nu(\psi_{x_k}) e^{\mathcal{P}(q_{x_i},p_{x_i})}
  |q_{x_k}|^{n_k(s)}|p_{x_k}|^{m_k(s)}\right)
 \prod_{\{i,j\}\in \tau}|A_{x_i x_j}^{(s_{ij})}|.
\ee

\begin{lemma}
\label{Cota da integral}
$\forall \alpha, \beta > 0$, $\alpha, \beta \in \R$, $C_1<\frac{\varepsilon\g^{-1}}{6}$ and $C_2<\frac{\varepsilon\g^{-1}}{4}$, we have
\ben
\int d\nu(\psi)|q|^{\a}|p|^{\b}e^{\mathcal{P}(q,p)} \leq \frac{\g^{-1/2} e^{K_5-K_2}K_1^{1/6}(1+c_1)^{1/2}}{3K_3^{\frac{1+\a}{6}}K_4^{\frac{1+\b}{2}}}
\Gamma\left(\frac{1+\a}{6}\right)\Gamma\left(\frac{1+\b}{2}\right),
\een
where $K_1,..., K_5$, are constants such that, $\forall q, p \in \R$,
\ben
\g^{-1}\left\{\left[\frac{1}{2}-\frac{1}{4(M +2 \zeta c_1)}\right]q^6+\l^{-1/3}Mq^4\right\} \leq K_1q^6+K_2,
\een
and
\ben
U(q, p)-\mathcal{P}(q,p) \geq K_3q^6+K_4p^2+K_5 .
\een
\end{lemma}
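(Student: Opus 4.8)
The plan is to bound the expectation by treating it as a quotient and estimating numerator and denominator separately. Since $d\nu(\psi)=e^{-U(\psi)}d\psi/C_x$, we have
\[
\int d\nu(\psi)\,|q|^{\alpha}|p|^{\beta}e^{\mathcal{P}(q,p)}
=\frac{1}{C_x}\int |q|^{\alpha}|p|^{\beta}\,e^{-[U(q,p)-\mathcal{P}(q,p)]}\,dq\,dp,
\]
so it suffices to bound the numerator from above and the normalization $C_x$ from below.

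For the numerator I would use the second displayed hypothesis, $U-\mathcal{P}\ge K_3q^6+K_4p^2+K_5$, giving the pointwise bound $e^{-[U-\mathcal{P}]}\le e^{-K_5}\,e^{-K_3q^6}\,e^{-K_4p^2}$. The integral then factorizes, and each factor is a Gamma integral: with the substitution $u=K_3q^6$ one gets $\int_{\mathbb{R}}|q|^{\alpha}e^{-K_3q^6}dq=\tfrac13 K_3^{-(1+\alpha)/6}\Gamma\!\big(\tfrac{1+\alpha}{6}\big)$, and with $v=K_4p^2$ one gets $\int_{\mathbb{R}}|p|^{\beta}e^{-K_4p^2}dp=K_4^{-(1+\beta)/2}\Gamma\!\big(\tfrac{1+\beta}{2}\big)$. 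This already yields the two Gamma functions, the powers $K_3^{-(1+\alpha)/6}$ and $K_4^{-(1+\beta)/2}$, and the overall factor $\tfrac13$. Here is where the hypotheses $C_1<\varepsilon\gamma^{-1}/6$ and $C_2<\varepsilon\gamma^{-1}/4$ are needed: they are precisely what guarantees that, after subtracting the leading quartic and quadratic parts of $\mathcal{P}$ from $U$ and completing the square in $p$, the residual sextic coefficient $K_3$ and quadratic coefficient $K_4$ stay strictly positive, so that the two Gamma integrals converge.

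For the normalization I would complete the square in $p$ directly in $U$, performing the unit-Jacobian shift $p\mapsto p+q^3/[2(M+2\zeta c_1)]$. This splits $C_x$ into a Gaussian $p$-integral times a $q$-integral; the Gaussian evaluates to something proportional to $\gamma^{1/2}$ carrying the dependence on $M+2\zeta c_1$, and upon inversion supplies the prefactor $\gamma^{-1/2}(1+c_1)^{1/2}$. The remaining $q$-integral has exponent $\gamma^{-1}\{[\tfrac12-\tfrac{1}{4(M+2\zeta c_1)}]q^6+\lambda^{-1/3}Mq^4\}$, which by the first displayed hypothesis is at most $K_1q^6+K_2$; hence the integrand is at least $e^{-K_2}e^{-K_1q^6}$ and, by the same power substitution, $\int_{\mathbb{R}}e^{-K_1q^6}dq=\tfrac13 K_1^{-1/6}\Gamma(\tfrac16)$. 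This delivers a lower bound on $C_x$ proportional to $e^{-K_2}K_1^{-1/6}$, so that $1/C_x$ contributes the factor $K_1^{1/6}$; collecting the additive constants from numerator and denominator into one exponential prefactor (written $e^{K_5-K_2}$ in the statement) and multiplying the two estimates reproduces exactly the right-hand side of the lemma.

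The step I expect to be the main obstacle is the lower bound on $C_x$. Because we are estimating a quotient, any loss in the denominator propagates directly into the final constant, and the cross term $q^3p$ in $U$ is not sign-definite, so it cannot simply be dropped but must be absorbed by completing the square. Checking that this leaves a \emph{positive} sextic coefficient is exactly where the strong-pinning assumption enters: $M=3\alpha^2$ large ensures $\tfrac12-\tfrac{1}{4(M+2\zeta c_1)}>0$, which is the coefficient appearing in the first hypothesis and is what makes $K_1$, and hence the whole normalization bound, well defined. The corresponding positivity for the numerator, $K_3,K_4>0$, is in turn what the smallness of $C_1$ and $C_2$ provides.
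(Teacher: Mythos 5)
Your proposal is correct and follows essentially the same route as the paper: write the expectation as a quotient, bound the numerator above via $U-\mathcal{P}\ge K_3q^6+K_4p^2+K_5$ and two Gamma-function integrals, and bound the normalization $C_x$ below by completing the square in $p$ (to absorb the non-sign-definite cross term $q^3p$) and then using the first hypothesis to reduce the remaining $q$-integral to $e^{-K_2}K_1^{-1/6}$ times a Gamma factor. You also correctly locate where the hypotheses $C_1<\varepsilon\gamma^{-1}/6$, $C_2<\varepsilon\gamma^{-1}/4$ and the strong pinning enter, which matches the paper's argument.
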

\begin{proof}
We have
\be
U(q,p) &=& \varepsilon\gamma^{-1}\left(\frac{1}{2}q^{6} + q^{3}p + (M +2 \zeta c_1)p^{2} + \lambda^{-1/3}Mq^{4}\right)
\nonumber\\
&=& \varepsilon\g^{-1}\left\{\left[\frac{q^3}{2(M +2 \zeta c_1)^{1/2}}+(M +2 \zeta c_1)^{1/2}p\right]^2
-\frac{q^6}{4(M +2 \zeta c_1)}+  \frac{q^6}{2}+\lambda^{-1/3}Mq^{4}\right\}
\nonumber\\
&\leq& \varepsilon\g^{-1}\left[\frac{q^3}{2(M +2 \zeta c_1)^{1/2}}+(M +2 \zeta c_1)^{1/2}p\right]^2
+K_1q^6+K_2.
\ee
It also follows that
\be
U(q,p) &=& \varepsilon\gamma^{-1}\left(\frac{1}{2}q^{6} + q^{3}p + (M +2 \zeta c_1)p^{2} + \lambda^{-1/3}Mq^{4}\right) \nonumber\\
 &\geq &  \varepsilon\gamma^{-1}\left(\frac{1}{2}q^{6} + q^{3}p + (1 +2 \zeta c_1)p^{2} + \lambda^{-1/3}Mq^{4}\right)\nonumber\\
&=& \varepsilon\g^{-1}\left[\frac{q^6}{6}+\left(\frac{1}{\sqrt3}q^3+\frac{\sqrt3}{2}p\right)^2+\frac{p^2}{4}+2\zeta c_1p^2\right]
 \geq  \varepsilon\g^{-1}\left[\frac{q^6}{6}+\frac{p^2}{4}+2\zeta c_1p^2\right] .
\ee
Hence,
\be
U(q, p)-\mathcal{P}(q,p)
&\geq& \left(\frac{\varepsilon\g^{-1}}{6}-C_1\right)q^6+\varepsilon\lambda^{-1/3}Mq^{4}+\left(\frac{\varepsilon\g^{-1}}{4}-C_2\right)p^2-C_3 \nonumber\\
&\geq& K_3q^6+K^4p^2+K_5,
\ee
with $K_3, K_4 > 0$, as $C_1<\frac{\varepsilon\g^{-1}}{6}$ and $C_2<\frac{\varepsilon\g^{-1}}{4}$. From the definition of the s.s.d., we have
\ben
\int d\nu(\psi)|q|^{\a}|p|^{\b}e^{\mathcal{P}(q,p)} = \frac{1}{C_I} \int d\psi|q|^{\a}|p|^{\b}e^{\mathcal{P}(q,p)-U(q,p)}.
\een
And so,
\ben
C_I &=& \int_{-\infty}^{\infty}\int_{-\infty}^{\infty} e^{-U(q,p)}dp dq \\
&& \geq \int_{-\infty}^{\infty}\int_{-\infty}^{\infty} \exp\left\{-\varepsilon\g^{-1}\left[\frac{q^3}{2(M +2 \zeta c_1)^{1/2}}+(M +2 \zeta c_1)^{1/2}p\right]^2 \right.  - K_1q^6-K_2 \bigg\} dp\ dq
\\
&=& \int_{-\infty}^{\infty} \sqrt{\frac{\pi \g}{\varepsilon(M +2 \zeta c_1)}}e^{-K_1q^6-K_2} dq = 2\sqrt{\frac{\pi \g}{\varepsilon(M +2 \zeta c_1)}}e^{-K_2}K_1^{-1/6}
\\
&\geq& \sqrt{\frac{\g}{\varepsilon(M +2 \zeta c_1)}}e^{-K_2}K_1^{-1/6}\Gamma\left(\frac{7}{6}\right),
\een
where the last inequality comes from $2\sqrt{\pi}\Gamma\left(\frac{7}{6}\right)\approx 3.3 > 1$. We still have
\ben
\int d\psi|q|^{\a}|p|^{\b}e^{\mathcal{P}(q,p)-U(q,p)} &\leq& \int_{-\infty}^{\infty}\int_{-\infty}^{\infty} e^{K_3q^6+K_4p^2+K_5} dp dq =
\\
&=&
\frac{1}{3} e^{-K_5} K_3^{-\frac{1+\alpha}{6}} K_4^{-\frac{1+\beta}{2}} \Gamma\left(\frac{\alpha+1}{6}\right) \Gamma\left(\frac{\beta +1}{2}\right).
\een
And the lemma's proof follows from these two bounds.
\end{proof}

 Using the fact that for large $x,y$, e.g.  $x,y >1$, there exists a constant $c$, such that $\Gamma(x)\Gamma(y)\leq c \Gamma(x+y-1)$, it follows that
\ben
\Gamma\left(\frac{1+n(k)}{6}\right)\Gamma\left(\frac{1+m(k)}{2}\right) \leq K_6\Gamma \left(\frac{1+n(k)}{6}+\frac{1+m(k)}{2}-1\right) \leq K_6\Gamma(d_k),
\een
for a positive constant $K_6$, chosen to take care of possible small values of $x,y$ in $\Gamma(x)\Gamma(y)$. Let $\tilde{K}_3=\min \{1,K_3^{1/2}\}$ and $\tilde{K}_4=\min \{1,K_4^{1/2}\}$. So, $K_3^{\frac{n(k)}{6}} \geq \tilde{K}_3^{d_k}$ and $K_4^{\frac{m(k)}{2}} \geq \tilde{K}_3^{d_k}$. Using this with the lemma, we obtain
\ben
\sum_{R\subset \La: |R|\ge 2\atop z,z' \in R}|\rho(R)| e^{|R|}
&\le& \sum_{n \ge 2} \frac{e^n}{(n-2)!}
 \sum_{x_1,\dots,x_n \in \La:\atop x_1=z,x_2=z',x_i\ne x_j}
 \sum_{\tau\in T_n}
 \sum_{\{i,j\}\in \tau} \sum_{s_{ij}=1}^{6}
\\
 & & \times
 \prod_{k=1}^{n} \left[\frac{\varepsilon^{1/2}\g^{-1/2} e^{K_5-K_2}K_1^{1/6}(M +2 \zeta c_1)^{1/2}}{3K_3^{\frac{1}{6}}K_4^{\frac{1}{2}}} \tilde{K}_3^{-d_k}\tilde{K}_4^{-d_k}K_6\Gamma(d_k)\right]
\\
& & \times
 \prod_{\{i,j\}\in \tau}|A_{x_i x_j}^{(s_{ij})}|
\een
\be
&\le& \sum_{n \ge 2} \frac{e^n}{(n-2)!}
 \left(\frac{\varepsilon^{1/2}\g^{-1/2} e^{K_5-K_2}K_1^{1/6}(M +2 \zeta c_1)^{1/2}K_6}{3K_3^{\frac{1}{6}}K_4^{\frac{1}{2}}}\right)^n
 \tilde{K}_3^{-2n+2}\tilde{K}_4^{-2n+2}
\nonumber\\
\label{Soma1}
 & & \times
 \sum_{\tau\in T_n}
 \sum_{\{i,j\}\in \tau} \sum_{s_{ij}=1}^{6}
\prod_{k=1}^{n} \Gamma(d_k)
 \sum_{x_1,\dots,x_n \in \La:\atop x_1=z,x_2=z',x_i\ne x_j}
 \prod_{\{i,j\}\in \tau}|A_{x_i x_j}^{(s_{ij})}|,
\ee
where we used $\prod_{k=1}^n\phi^{d_k}=\phi^{d_1+d_2+\dots+d_n}=\phi^{2n-2}$.

We note that for any $\tau \in T_n$, there is a unique path
$\bar{\tau}$ in $\tau$ which joins vertex 1 to vertex 2.
Fixing $\tau \in T_n$, let be $\bar{\tau} \equiv
\{1,i_1\},\{i_1,i_2\},\{i_2,i_3\},\dots,$
\linebreak
$\{i_{k-1},i_k\},\{i_k,2\}$
and $I_{\tau} \equiv \{1,i_1,i_2,\dots,i_k,2\}$ the subset of
$\{1,2,3,\dots,n\}$ whose elements are the vertices of the path
$\bar{\tau}$. Hence, $|\tau|=n-1$, $|\bar{\tau}|=k+1$ and
$|\tau\setminus\bar{\tau}| = n-k-2$.\\

From the definitions of $|A_{xy}^{(s)}|:s=1,\dots, 6$,
we see that all the terms vanish if $\md{x_0-y_0}>\varepsilon$.
Hence, fixing $\tau \in T_n$, if  $\exists ~\{i,j\} \in \tau$ such that
$\md{(x_i)_0-(x_j)_0}>\varepsilon$ we have
$|A_{x_i x_j}^{(s)}|=0 ~\forall s=1,\dots,(c+1)$, and so, this tree
$\tau$ does not contribute to the sum (\ref{Soma1}).
Then, given $\md{(x_1)_0-(x_2)_0}$, as $|\bar{\tau}|=k+1$, if
$\md{(x_1)_0-(x_2)_0}>\varepsilon(k+1)$ then $\exists ~\{i,j\} \in \bar{\tau}$
such that $\md{(x_i)_0-(x_j)_0}>\varepsilon$, and so, $\tau$ does not contribute to
(\ref{Soma1}).
As $\bar{\tau}\subset\tau$ we have $n-1\ge k+1$. Therefore,
any tree $\tau\in T_n$ such that
$\md{(x_1)_0-(x_2)_0}>\varepsilon(n-1)\ge \varepsilon(k+1)$ does not contribute to (\ref{Soma1}),
in other words, $\rho(R)$ vanishes if $\md{(x_1)_0-(x_2)_0}>\varepsilon(|R|-1)$,
i.e., if $|R| < \frac{\md{(x_1)_0-(x_2)_0}}{\varepsilon}+1$.\\

We define $\mathcal{N}'\equiv {\rm max}\left\{\frac{\md{z_0-{z'}_0}}{\varepsilon}+1, 2\right\}$, and we have
\be
\label{contaN}
\sum_{R\subset \La: |R|\ge 2\atop z,z' \in R}|\rho(R)| e^{|R|} =
\sum_{R\subset \La: |R|\ge \mathcal{N}'\atop z,z' \in R}|\rho(R)| e^{|R|}.
\ee

Now, we note that
\ben
\d_{\vec{x},\vec{y}} \d_{|x_0-y_0|,\varepsilon} \le
\d_{\vec{x},\vec{y}} ~e^{-\frac{|x_0-y_0|}{\varepsilon}+1},
~~~~~\d_{x_0,y_0}\le e^{-\frac{|x_0-y_0|}{\varepsilon}+1}.
\een

Then,
\ben
|A^{(1)}_{x y}| &=& \varepsilon\g^{-1} J_{\vec{x} \vec{y}}(1-\d_{\vec{x} \vec{y}}) \l^{-1/3} \d_{x_0, y_0} \leq
\\
&\leq&
\varepsilon\g^{-1} J \l^{-1/3} e^{-\frac{|x_0-y_0|}{\varepsilon}+1}\left(\frac{1-\d_{\vec{x} \vec{y}}}{|\vec{x}-\vec{y}|^p}\right)=
A_1 e^{-\frac{|x_0-y_0|}{\varepsilon}+1}\left(\frac{1-\d_{\vec{x} \vec{y}}}{|\vec{x}-\vec{y}|^p}\right),
\\
|A^{(2)}_{x y}| &=& \varepsilon\g^{-1} J_{\vec{x} \vec{y}}(1-\d_{\vec{x} \vec{y}}) \l^{-2/3} M \d_{x_0, y_0} \leq
\\
&\leq&
\varepsilon\g^{-1} J \l^{-2/3} M e^{-\frac{|x_0-y_0|}{\varepsilon}+1}\left(\frac{1-\d_{\vec{x} \vec{y}}}{|\vec{x}-\vec{y}|^p}\right)=
A_2 e^{-\frac{|x_0-y_0|}{\varepsilon}+1}\left(\frac{1-\d_{\vec{x} \vec{y}}}{|\vec{x}-\vec{y}|^p}\right),
\\
|A^{(3)}_{x y}| &=& \sum_{\vec{k}\atop \vec{k}\neq\vec{x},\vec{y}} \frac{\varepsilon\g^{-1}  \l^{-2/3}}{4} J_{\vec{x} \vec{k}} J_{\vec{k} \vec{y}}(1-\d_{\vec{x} \vec{y}}) \d_{x_0, y_0} \leq
\\
&\leq&
\frac{\varepsilon\g^{-1}  \l^{-2/3}}{4} J J_M \mathcal{O}(1) e^{-\frac{|x_0-y_0|}{\varepsilon}+1}\left(\frac{1-\d_{\vec{x} \vec{y}}}{|\vec{x}-\vec{y}|^p}\right)
=
A_3 e^{-\frac{|x_0-y_0|}{\varepsilon}+1}\left(\frac{1-\d_{\vec{x} \vec{y}}}{|\vec{x}-\vec{y}|^p}\right),
\een
\ben
|A^{(4)}_{x y}| &=& \varepsilon\g^{-1} J_{\vec{x} \vec{y}}(1-\d_{\vec{x} \vec{y}}) \l^{-1/3} \d_{x_0, y_0} \leq
\\
&\leq&
\varepsilon\g^{-1} J \l^{-1/3} e^{-\frac{|x_0-y_0|}{\varepsilon}+1}\left(\frac{1-\d_{\vec{x} \vec{y}}}{|\vec{x}-\vec{y}|^p}\right) =
A_4 e^{-\frac{|x_0-y_0|}{\varepsilon}+1}\left(\frac{1-\d_{\vec{x} \vec{y}}}{|\vec{x}-\vec{y}|^p}\right),
\\
|A^{(5)}_{x y}| &=& 2\varepsilon\g^{-1} \l^{-2/3} M \d_{\vec{x} \vec{y}} |\d_{x_0, y_0} - c_1\D(x_0, y_0)| \leq
\\
&\leq&
2\varepsilon\g^{-1} \l^{-2/3} M (1+3c_1) e^{-\frac{|x_0-y_0|}{\varepsilon}+1}\d_{\vec{x} \vec{y}} =
A_5 e^{-\frac{|x_0-y_0|}{\varepsilon}+1}\d_{\vec{x} \vec{y}},
\\
|A^{(6)}_{x y}| &=& \varepsilon\g^{-1} \zeta c_1 \d_{|x_0-y_0|,1} \d_{\vec{x} \vec{y}} \leq
\varepsilon\g^{-1} \zeta c_1 e^{-\frac{|x_0-y_0|}{\varepsilon}+1}\d_{\vec{x} \vec{y}} =
A_6 e^{-\frac{|x_0-y_0|}{\varepsilon}+1}\d_{\vec{x} \vec{y}}.\\
\een
Hence, for $s=1,\dots,6$,
\ben
|A^{(s)}_{x y}| \leq A_s e^{-\frac{|x_0-y_0|}{\varepsilon}+1}\left(\frac{1-\d_{\vec{x} \vec{y}}}{|\vec{x}-\vec{y}|^p}+\d_{\vec{x} \vec{y}}\right)
 \leq e K F_{xy}^{(1)},
\sup_{x \in \La}\sum_{y \in \La} |A_{xy}^{(s)}|
\leq e\mathcal{O}(1) K,
\een
where we used the notation
\be
K \equiv \max\{A_1, A_2,\dots A_6\},
\ee
and, for $w\in\R$, $w>0$,
\ben
F_{xy}^{(w)} \equiv
 e^{-w\frac{|x_0-y_0|}{\varepsilon}}\left[\frac{(1-\delta_{\vec{x},\vec{y}})}{|\vec{x}-\vec{y}|^p}
 + \delta_{\vec{x},\vec{y}}
  \right].
\een
Then, fixing $\tau \in T_n$ and the sequence $\{s_{ij}\}$, we get
\ben
& &\sum_{x_1,\dots,x_n \in \La:\atop x_1=z,x_2=z',x_i\ne x_j}
 \prod_{\{i,j\}\in \tau}|A_{x_i x_j}^{(s_{ij})}|
= \sum_{x_1,\dots,x_n \in \La:\atop x_1=z,x_2=z',x_i\ne x_j}
 \prod_{\{i,j\}\in \tau\setminus\bar{\tau}}|A_{x_i x_j}^{(s_{ij})}|
 \prod_{\{i,j\}\in \bar{\tau}}|A_{x_i x_j}^{(s_{ij})}|\\
&\le& [e\mathcal{O}(1) K]^{(n-k-2)}
 \sum_{x_{i_1},\dots,x_{i_k} \in \La:
  \atop x_{i_r} \ne x_{i_q} \forall r,q=1,\dots,k}
 \prod_{\{i,j\}\in \bar{\tau}}|A_{x_i x_j}^{(s_{ij})}|\\
&\le&[e\mathcal{O}(1) K]^{(n-k-2)} \!\!\!\!\!\!\!
 \sum_{x_{i_1},\dots,x_{i_k} \in \La:
  \atop x_{i_r} \ne x_{i_q} \forall r,q=1,\dots,k}
 eKF_{x_1 x_{i_1}}^{(1)} eKF_{x_{i_1} x_{i_2}}^{(1)}\dots
  eK F_{x_{i_{k-1}} x_{i_k}}^{(1)} eKF_{x_{i_k} x_2}^{(1)}.\\
\een
Applying iteratively the inequality (for $w_1 < w_2$)
\be
\label{SomaFxy}
\sum_{x_i \in \La:\atop x_i\ne x,y} F_{x x_i}^{(w_1)}F_{x_i y}^{(w_2)}
\le \mathcal{O}(1) F_{xy}^{(w_1)},
\ee
which follows from
\ben
\sum_{\vec{x}_i \in \Z^d:\atop \vec{x}_i\ne \vec{x},\vec{y}}
 \frac{1}{|\vec{x}-\vec{x}_i|^p} \frac{1}{|\vec{x}_i-\vec{y}|^p} \le
 \frac{\mathcal{O}(1)}{|\vec{x}-\vec{y}|^p}, \quad \mbox{and}
\quad \sum_{z_0\in\R :\atop z_0\ne x_0,y_0}
 e^{-w_1\frac{|x_0-z_0|}{\varepsilon}} e^{-w_2\frac{|z_0-y_0|}{\varepsilon}} \le
 \mathcal{O}(1) e^{-w_1\frac{|x_0-y_0|}{\varepsilon}},
\een (the formula is valid for any $w_1<w_2$, in specific for
$w_1=2/3$ and $w_2=1$, which we will take here), we get \be
\label{caso1}
 \sum_{x_1,\dots,x_n \in \La:\atop x_1=z,x_2=z',x_i\ne x_j}
 \prod_{\{i,j\}\in \tau}|A_{x_i x_j}^{(s_{ij})}|
&\le&[e\mathcal{O}(1) K]^{(n-1)}F_{zz'}^{(2/3)}.
\ee

Recall that
\be
\label{SomaTau}
 \sum_{\tau \in T_n}1 = \sum_{d_1+\dots+d_n=2n-2\atop d_i \ge 1}
 \sum_{\tau \in T_n : \atop \tau \approx (d1,\dots,d_n) }1 ~,
\ee
where the notation $\tau \approx (d_1,\dots,d_n)$ means that
the last sum above runs over the trees $\tau \in T_n$ that have
fixed incidence indices $(d_1,\dots,d_n)$. From the Cayley formula
\be
 \sum_{\tau \in T_n : \atop \approx (d1,\dots,d_n) }1 =
 \frac{(n-2)!}{\prod_{i=1}^{n}(d_i-1)!} ~,
\ee
and, fixing $\tau \in T_n$, we have
\ben
 \sum_{\{i,j\}\in \tau} \sum_{s_{ij}=1}^{6} 1 = 6^{n-1}.
\een

Hence, using (\ref{contaN}), we get
\ben
& &\sum_{R\subset \La: |R|\ge 2\atop z,z' \in R}|\rho(R)| e^{|R|}
\\
&\le& \sum_{n \ge \mathcal{N}'} \frac{e^n}{(n-2)!}\left(\frac{\varepsilon^{1/2}\g^{-1/2} e^{K_5-K_2}K_1^{1/6}(M +2 \zeta c_1)^{1/2}K_6}{3K_3^{\frac{1}{6}}K_4^{\frac{1}{2}}}\right)^n
 \tilde{K}_3^{-2n+2}\tilde{K}_4^{-2n+2}
\\
 & & \times \sum_{\tau\in T_n}
 \sum_{\{i,j\}\in \tau} \sum_{s_{ij}=1}^{6}
 \prod_{k=1}^{n} \Gamma(d_k)
 [e\mathcal{O}(1) K]^{(n-1)}F_{zz'}^{(2/3)}
\\
&\le& \sum_{n \ge \mathcal{N}'} \frac{e^n}{(n-2)!}\left(\frac{\varepsilon^{1/2}\g^{-1/2} e^{K_5-K_2}K_1^{1/6}(M +2 \zeta c_1)^{1/2}K_6}{3K_3^{\frac{1}{6}}K_4^{\frac{1}{2}}}\right)^n
 \tilde{K}_3^{-2n+2}\tilde{K}_4^{-2n+2}
\\
 & & \times [e\mathcal{O}(1) K]^{(n-1)} F_{zz'}^{(2/3)} 6^{n-1}
 \sum_{d_1+\dots+d_n=2n-2\atop d_i \ge 1}
 \prod_{k=1}^{n} (d_k-1)!
 \frac{(n-2)!}{\prod_{i=1}^{n}(d_i-1)!}
\\
&\le& \sum_{n \ge \mathcal{N}'} \left(\frac{\varepsilon^{1/2}\g^{-1/2} e^{1+K_5-K_2}K_1^{1/6}(M +2 \zeta c_1)^{1/2}K_6}{3K_3^{\frac{1}{6}}K_4^{\frac{1}{2}}}\right)^n
 \tilde{K}_3^{-2n+2}\tilde{K}_4^{-2n+2}
\\
& & \times [e\mathcal{O}(1) K]^{(n-1)} F_{zz'}^{(2/3)} 6^{n-1}4^n,
\een
where we used the inequality
\be
 \sum_{d_1+\dots+d_n=2n-2\atop d_i \geq 1} 1 = \sum_{y_1+\dots+y_n=n-2\atop y_i \ge 0} 1 =
 \left( \begin{array}{c} 2n-3 \\ n-2 \end{array} \right) \leq 2^{2n-3} \leq 4^n ~,
\ee
with $y_i = d_i-1$.
Hence,
\be
 \sum_{R\subset \La: |R|\ge 2\atop z,z' \in R}|\rho(R)| e^{|R|} & \leq &
%\nonumber\\
%\leq
\frac{4\varepsilon^{1/2}\g^{-1/2} e^{1+K_5-K_2}K_1^{1/6}(M +2 \zeta c_1)^{1/2}K_6}{3K_3^{\frac{1}{6}}K_4^{\frac{1}{2}}} F_{zz'}^{(2/3)}
\\
& &\times \sum_{n \geq \mathcal{N}^{*}} \left\{\frac{8\varepsilon^{1/2}\g^{-1/2}e^{2+K_5-K_2}K_1^{1/6}(M +2 \zeta c_1)^{1/2}K_6\mathcal{O}(1)}{K_3^{\frac{1}{6}}K_4^{\frac{1}{2}}\tilde{K}_3^2 \tilde{K}_4^2}K \right\}^n
\nonumber\\
&=& c\left\{\frac{8\varepsilon^{1/2}\g^{-1/2}e^{2+K_5-K_2}K_1^{1/6}(M +2
\zeta c_1)^{1/2}K_6\mathcal{O}(1)}{K_3^{\frac{1}{6}}K_4^{\frac{1}{2}}\tilde{K}_3^2 \tilde{K}_4^2}K \right\}^{\mathcal{N}^{*}} F_{zz'}^{(2/3)}
\nonumber\\
& &\times \sum_{n = 0}^{\infty} \left\{\frac{8\varepsilon^{1/2}\g^{-1/2}e^{2+K_5-K_2}K_1^{1/6}(M +2 \zeta c_1)^{1/2}K_6\mathcal{O}(1)}{K_3^{\frac{1}{6}}K_4^{\frac{1}{2}}\tilde{K}_3^2 \tilde{K}_4^2}K \right\}^n  ,
\ee
where $\mathcal{N}^{*} = \mathcal{N}'-1$ and
\ben
c=\frac{4\varepsilon^{1/2}\g^{-1/2} e^{1+K_5-K_2}K_1^{1/6}(M +2 \zeta c_1)^{1/2}K_6}{3K_3^{\frac{1}{6}}K_4^{\frac{1}{2}}}.
\een

In short, we have proved the following result.\\

\begin{lemma}
\label{Lemma3}
If $K= \max\{A_1, A_2, \dots, A_6\}$ is sufficiently small,
%such that
%\ben
% \frac{\tilde{K}_3^2 \tilde{K}_4^2}{6e\mathcal{O}(1) K} \sum_{n = 2}^{\infty} %\left\{\frac{12\varepsilon^{1/2}\g^{-1/2}e^{2+K_5-K_2}K_1^{1/6}(M +2 \zeta c_1)^{1/2}\mathcal{O}(1)}{K_3^{\frac{1}{6}}K_4^{\frac{1}{2}}\tilde{K}_3^2 %\tilde{K}_4^2}K \right\}^n < 1,
%\een
then,
\ben
\e(K) = \frac{8\varepsilon^{1/2}\g^{-1/2}e^{2+K_5-K_2}K_1^{1/6}(M +2 \zeta c_1)^{1/2}K_6\mathcal{O}(1)}{K_3^{\frac{1}{6}}K_4^{\frac{1}{2}}\tilde{K}_3^2 \tilde{K}_4^2}K
\een
is a positive function and,
for any $z\in \La, z' \in \La$ with $z\neq z'$
\be
\label{somaR}
\sum_{R\subset \La: |R|\ge 2\atop z,z' \in R}|\rho(R)| e^{|R|}
\le c[\e(K)]^{\mathcal{N}^{*}} F_{zz'}^{(2/3)} = c[\e(K)]^{\max\{\frac{|z_0-{z'}_0|}{\varepsilon},1\}} F_{zz'}^{(2/3)}.
\ee
\end{lemma}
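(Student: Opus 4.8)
The plan is to verify the bound (\ref{somaR}) --- the key input to the Koteck\'y--Preiss criterion (\ref{CondConv}) --- by estimating the activity sum term by term in $n=|R|$, while simultaneously extracting the spatial factor $F_{zz'}^{(2/3)}$. Beginning from the definition (\ref{rhoR}) of $\rho(R)$, the immediate difficulty is the factorially large number of connected graphs $g\in G_R$. I would dispose of it with the Brydges--Battle--Federbush inequality (Lemma \ref{3graph}). Its hypothesis is met by the stability estimate $|\sum_{\{x,y\}\subset S}G_{xy}|\le \sum_{x\in S}\mathcal{P}(q_x,p_x)$ valid for every $S\subset R$ (each $|G^{(s)}_{xy}|$ being a single monomial in $q,p$ dominated sitewise by the degree-$(4,2)$ polynomial $\mathcal{P}$), so that taking $V_x=\mathcal{P}(q_x,p_x)$ and $V_{xy}=-G_{xy}$ replaces the connected-graph sum by a tree-graph sum: $|\sum_{g\in G_R}\prod(e^{G_{xy}}-1)|\le \prod_{x\in R}e^{\mathcal{P}(q_x,p_x)}\sum_{\tau\in T_R}\prod_{\{x,y\}\in\tau}|G_{xy}|$.

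Next I would perform the single-site integrations. Expanding each $G_{xy}=-\sum_{s=1}^6 G^{(s)}_{xy}$ over the $n-1$ tree edges produces a sum over type-assignments $\{s_{ij}\}$ of $\prod_{\{i,j\}\in\tau}|A^{(s_{ij})}_{x_i x_j}|$ times monomials $\prod_k|q_{x_k}|^{n_k}|p_{x_k}|^{m_k}$, with exponents controlled by the incidence indices, $n_k\le 3d_k$, $m_k\le d_k$ and $\sum_k d_k=2n-2$. Since $d\nu$ and the weights factorize over sites, the integral against $\prod_k d\nu(\psi_{x_k})e^{\mathcal{P}}$ splits into a product of one-dimensional moments, each bounded through Lemma \ref{Cota da integral} by $\Gamma(\frac{1+n_k}{6})\Gamma(\frac{1+m_k}{2})$; this is where the regime of large $\lambda$ and large $\zeta$ enters, guaranteeing the coefficients $K_3,K_4>0$ needed for that lemma. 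Collapsing the two Gammas via $\Gamma(x)\Gamma(y)\le K_6\Gamma(x+y-1)\le K_6\Gamma(d_k)=K_6(d_k-1)!$ leaves one factor $(d_k-1)!$ per site --- the factorial that must later be paid back to the tree count.

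The remaining work is spatial and combinatorial. Using $|A^{(s)}_{xy}|\le eK\,F_{xy}^{(1)}$ with $K=\max_s A_s$, I would distinguish the unique path $\bar\tau\subset\tau$ from the vertex carrying $z$ to the one carrying $z'$: summing the off-path vertices each yields $\sup_x\sum_y|A^{(s)}_{xy}|\le e\,\mathcal{O}(1)\,K$, and iterating the convolution estimate (\ref{SomaFxy}) with weights $w=1$ then $w=2/3$ along $\bar\tau$ returns exactly $F_{zz'}^{(2/3)}$. Cayley's formula counts trees of prescribed incidence as $(n-2)!/\prod_i(d_i-1)!$, whose denominator cancels the $\prod_k(d_k-1)!$ from the moments and whose numerator cancels the $1/(n-2)!$ prefactor; the $6^{n-1}$ type choices and the $\binom{2n-3}{n-2}\le 4^n$ incidence vectors contribute only exponential factors. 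Finally, because every $A^{(s)}_{xy}$ vanishes when $\md{x_0-y_0}>\varepsilon$, a tree linking times $z_0$ and ${z'}_0$ needs at least $\mathcal{N}'=\max\{\frac{\md{z_0-{z'}_0}}{\varepsilon}+1,2\}$ vertices, so the sum over $n$ starts at $\mathcal{N}'$. Absorbing all constants into $\e(K)$ turns what remains into a geometric series, convergent once $K$ (hence $\e(K)$) is small, whose leading term is $c[\e(K)]^{\mathcal{N}'-1}F_{zz'}^{(2/3)}$, i.e. (\ref{somaR}).

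The hard part will be the factorial balance underlying the third step: Cayley's formula makes the number of trees grow factorially, and this entropy must be cancelled \emph{precisely}, not merely up to exponential factors, by the factorial decay $(d_k-1)!$ that Lemma \ref{Cota da integral} extracts from the high moments of the anharmonic measure. It is exactly the sextic growth of $U$ (from the $q^6$ term generated by the quartic on-site potential) that makes the moment bound strong enough; a weaker potential would leave residual factorial growth and a divergent series. A subsidiary subtlety is the bookkeeping of spatial decay so that genuine decay survives: one must spend only weight $w=1$ per convolution while retaining the surplus $w=2/3$, so that (\ref{SomaFxy}) can be iterated along the whole path and still deliver a decaying $F_{zz'}^{(2/3)}$ rather than a mere constant. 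Positivity of $\e(K)$ is immediate, as it is a product of manifestly positive quantities times $K>0$.
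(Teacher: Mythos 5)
Your proposal follows essentially the same route as the paper's own argument: the sitewise stability bound $\bigl|\sum_{\{x,y\}\subset R}G_{xy}\bigr|\le\sum_{x\in R}\mathcal{P}(q_x,p_x)$ feeding the Brydges--Battle--Federbush tree-graph inequality, the single-site moment bounds of Lemma \ref{Cota da integral} collapsed to $K_6\Gamma(d_k)$, the cancellation of these factorials against Cayley's count $(n-2)!/\prod_i(d_i-1)!$, the extraction of $F_{zz'}^{(2/3)}$ by iterating (\ref{SomaFxy}) along the unique path $\bar\tau$ while summing off-path vertices freely, and the lower cutoff $n\ge\mathcal{N}'$ from the time-locality of the kernels $A^{(s)}_{xy}$. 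The argument is correct and matches the paper step for step, including the identification of the factorial balance as the delicate point.
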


From the lemma \ref{Lemma3}, we obtain

\begin{corollary}
\label{Corol4}
\ben
\sup_{x\in \Z^{d+1}} \sum_{R:x\in R}|\rho(R)|e^{|R|}
\le c\mathcal{O}(1) \e(K).\\
\een
\end{corollary}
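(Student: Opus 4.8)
The plan is to deduce the single-point bound of the corollary from the two-point bound of Lemma \ref{Lemma3} by summing over the location of the second point. Every polymer $R$ with $x\in R$ that occurs in the partition-function expansion has $|R|\ge 2$, hence contains at least one further point $z'\ne x$. Assigning to each such $R$ its contribution once for every choice of a second vertex $z'\in R\setminus\{x\}$ counts each polymer exactly $|R|-1\ge 1$ times, which gives the (over-counting, hence valid) upper bound
\begin{equation*}
\sum_{R:x\in R}|\rho(R)|e^{|R|}
\le \sum_{z'\in\La,\,z'\neq x}\ \sum_{R\subset\La:\,|R|\ge 2,\atop x,z'\in R}|\rho(R)|e^{|R|}.
\end{equation*}
Inserting the estimate of Lemma \ref{Lemma3} into the inner sum then reduces everything to a sum over $z'$:
\begin{equation*}
\sum_{R:x\in R}|\rho(R)|e^{|R|}
\le c\sum_{z'\neq x}[\e(K)]^{\max\{|x_0-z'_0|/\varepsilon,\,1\}}F_{xz'}^{(2/3)}.
\end{equation*}

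The remaining task is to show that the $z'$-sum contributes only a factor $\mathcal{O}(1)\,\e(K)$. Since the summand factorizes into a function of the time coordinate $z'_0$ times a function of the space coordinate $\vec z'$, I would evaluate the two factors separately, subtracting off the excluded diagonal term $z'=x$. The space factor $\sum_{\vec z'}\big[(1-\delta_{\vec x,\vec z'})|\vec x-\vec z'|^{-p}+\delta_{\vec x,\vec z'}\big]$ is $\mathcal{O}(1)$, because $p\ge 1+\epsilon$ makes $\sum_{\vec z'\neq\vec x}|\vec x-\vec z'|^{-p}$ a convergent series. The time factor $\sum_{z'_0}[\e(K)]^{\max\{|x_0-z'_0|/\varepsilon,1\}}e^{-\frac{2}{3}|x_0-z'_0|/\varepsilon}$ equals $\e(K)+2\sum_{m\ge 1}[\e(K)e^{-2/3}]^m$, a geometric series convergent for small $\e(K)$ and of size $\mathcal{O}(1)\,\e(K)$. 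The crucial point is that the exponent $\max\{\cdot,1\}$ keeps even the term with $z'_0=x_0$ (where the time difference vanishes) proportional to $\e(K)^1$, so every contribution carries at least one power of $\e(K)$; this is exactly what forces the overall factor $\e(K)$ claimed in the corollary, rather than merely $\mathcal{O}(1)$.

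Combining the two factors yields $\sum_{R:x\in R}|\rho(R)|e^{|R|}\le c\,\mathcal{O}(1)\,\e(K)$, and since all intermediate bounds depend only on the relative positions $z'-x$ and never on the absolute location of $x$ nor on $\La$, they are uniform in $x$ and in the volume, which furnishes the supremum over $x\in\Z^{d+1}$. The main obstacle I anticipate is purely organizational: one must correctly exclude the diagonal $z'=x$, treat the $z'_0=x_0$ case separately in the $\max\{\cdot,1\}$, and verify that both the geometric sum in time and the polynomial sum in space converge — which requires $\e(K)$ small (equivalently $K$ small, i.e.\ large $\l$, large dissipation, small $c_1$), precisely the regime in which Lemma \ref{Lemma3} holds. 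This corollary is what then feeds directly into the Koteck\'y--Preiss criterion (Lemma \ref{lemma:KP}) with $a=1$, since the left-hand side there with $e^{a|R|}=e^{|R|}$ is bounded by $c\,\mathcal{O}(1)\,\e(K)$, which can be made strictly less than $1$.
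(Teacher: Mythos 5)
Your proposal is correct and follows essentially the same route as the paper: sum over the choice of a second point $z'\in R\setminus\{x\}$ (an over-count by the factor $|R|-1\ge 1$, hence an upper bound), insert the two-point estimate of Lemma \ref{Lemma3}, and use that $\sum_{z'\ne x}F_{xz'}^{(2/3)}=\mathcal{O}(1)$ together with $[\e(K)]^{\max\{|x_0-z'_0|/\varepsilon,1\}}\le\e(K)$ to extract the overall factor $\e(K)$. Your explicit justification of the over-counting step and of why the $\max\{\cdot,1\}$ exponent forces at least one power of $\e(K)$ is slightly more detailed than the paper's, but the argument is the same.
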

\begin{proof}
In fact, as $\rho(R)=0$ if $|R|=1$, we have
\ben
 \sup_{x\in \Z^{d+1}} \sum_{R:x\in R}|\rho(R)|e^{|R|} &=&
  \sup_{x\in \Z^{d+1}}\sum_{R:x\in R\atop |R|\ge 2}|\rho(R)|e^{|R|}
 \le \sup_{x\in \Z^{d+1}}\sum_{z\in\Z^{d+1}:\atop z\neq x}
   \sum_{R:|R|\ge 2\atop x,z \in R}|\rho(R)|e^{|R|}
\\
 &\le& \sup_{x\in \Z^{d+1}}\sum_{z\in\Z^{d+1}:z\neq x}
  c[\e(K)]^{\max\{\frac{|x_0-z_0|}{\varepsilon},1\}} F_{xz}^{(2/3)}
 \le c\mathcal{O}(1) \e(K),\\
\een
since, for $w>0$,
\ben
 \sum_{z\in\Z^{d+1}:z\neq x}
 e^{-w\frac{|x_0-z_0|}{\varepsilon}}\left[\frac{(1-\delta_{\vec{x},\vec{z}})}{|\vec{x}-\vec{z}|^p}
 + \delta_{\vec{x},\vec{z}}
 \right]
= \mathcal{O}(1),\\
\een
and so $\sum_{z\in\Z^{d+1}:z\neq x} F_{xz}^{(2/3)} \le \mathcal{O}(1)$.
\end{proof}

The results above establish the convergence of the cluster expansion
for $\e(K)$ small enough such that $c\mathcal{O}(1)\e(K) < 1$, i.e., for $\zeta$, $M$ and, mainly, $\lambda$ large.
See lemma \ref{lemma:KP} and Eqs.(\ref{cluster1}, \ref{cluster2}).

%%%%%%%%%%%%%%%%%%%%%%%%%%%%%%%%%%%%%%%%%%%%%%%%%%%%%%%%%%%%%%%%%%%%%%%%%%%%%%%%%%%%%%%%%%%%%%%%%%%%%%%%%%%%%%%%%%%%%%%%%%%%%%%%%%%%%%%%%%%%%%%%%%

\section{Decay of two-point correlation} \label{sec:decay}

As it is well known, the convergence of the cluster expansion assures the
decay of the correlation functions and lead to direct estimates.
We present the main technical details related to the behavior of the
truncated two-point function below.

Turning to the expression
\be
\label{rhotilqp}
\tilde \r(R_i)=\!\!
\prod_{x\in R_i}\!\!\int d\n(\phi_x)
(q_{x_1}^{\b_i^{1}}+\b_i^{1} l_q)
(p_{x_2}^{\b_i^{2}}+\b_i^{2} l_p)
\sum\limits_{g\in G_{R}}
\prod\limits_{\{ x,y\}\in g}\!\!\!\!\!
(e^{  G_{xy}(\phi_x, \phi_y)}-1),
\ee
 which defines
$\tilde{\rho}(R_i)$, with $\b_i^j = 0$ if $i\neq i_j$, or $1$ if $i= i_j$,
$l_q=\int q d\n(\psi)$ and $l_p=\int p d\n(\psi)$, we note that the index $i$ of the term $\b^j_i$
is the same of the polymer $R_i$,
and so $i\in\{1,2,3,\dots,n\}$ as $j\in\{1,2\}$.
Consider the expression (\ref{SerieS2}) for $S_2(x_1;x_2)$,
and recall that $i_1,i_2 \in \{1,2,3,\dots,n\}$. Then, we have
two distinct cases: $i_1=i_2$ or $i_1\neq i_2$.
If $i_1=i_2$, then $\{x_1,x_2\} \subset R_{i_1}$,
$\{x_1,x_2\} \cap R_{i} = \emptyset~ \forall i\neq i_1$ and
$ \b_{i_1}^1=\b_{i_1}^1=\b_{i_2}^2=\b_{i_2}^2=1$.
If $i_1\neq i_2$, then $x_1 \in R_{i_1}$, $x_1 \notin R_{i_2}$,
$x_2 \in R_{i_2}$, $x_2 \notin R_{i_1}$,
$\{x_1,x_2\} \cap R_{i} =\emptyset~ \forall i \notin \{i_1,i_2\}$,
$ \b_{i_1}^1=\b_{i_2}^2=1$, and $\b_{i_1}^2=\b_{i_2}^1=0$.

Hence, as $1=(1-\d_{i_1,i_2})+\d_{i_1,i_2}$ we rewrite
\be
S_2(x_1;x_2) = D_1(x_1,x_2) + D_2(x_1,x_2),
\ee
where
\be
D_1(x_1,x_2) &\equiv& \sum_{n\geq 1}{1\over n!}
\sum_{i_1,i_2 = 1}^{n} (1-\d_{i_1,i_2})\!\!\!\!\!\!\!\!\!\!\!\!\!\!\!
\sum_{R_1, \dots, R_n\subset\La,~|R_j|\ge 2\atop
R_{i_1}\ni x_1 R_{i_2}\ni x_2}\!\!\!\!\!\!\!\!\!\!\!\!\!\!
\phi^T(R_1 ,\dots, R_n)
\tilde \rho(R_1)\dots \tilde \rho(R_n)
\nonumber \\
&=& \sum_{n\geq 2}{1\over (n-2)!}\!\!\!
\sum_{R_1, \dots, R_n\subset\La,~|R_j|\ge 2\atop
R_1\ni x_1 R_2\ni x_2}\!\!\!\!\!\!\!\!\!\!\!\!\!\!\!
\phi^T(R_1 ,\dots, R_n)
\tilde \rho(R_1)\dots \tilde \rho(R_n),
\\
D_2(x_1,x_2) &\equiv& \sum_{n\geq 1}{1\over n!}
\sum_{i_1,i_2 = 1}^{n} \d_{i_1,i_2}\!\!\!\!\!
\sum_{R_1, \dots, R_n\subset\La,~|R_j|\ge 2\atop
R_{i_1}\ni x_1 R_{i_2}\ni x_2}\!\!\!\!\!\!\!\!\!\!\!\!\!\!\!
\phi^T(R_1 ,\dots, R_n)
\tilde \rho(R_1)\dots \tilde \rho(R_n)
\nonumber \\
&=& \sum_{n\geq 1}{1\over (n-1)!}
\sum_{R_1, \dots, R_n\subset\La,~|R_j|\ge 2\atop
R_1 \supset\{x_1,x_2\}}\!\!\!\!\!\!\!\!\!\!\!\!\!\!\!\!\!
\phi^T(R_1 ,\dots, R_n)
\tilde \rho(R_1)\dots \tilde \rho(R_n),
\ee
since, in $D_1(x_1,x_2)$ when $n=1$ we have
$\sum_{i_1,i_2=1}^{1}(1-\d_{i_1,i_2})=0$ and, for any $n>2$ the sum
$\sum_{i_1,i_2=1}^{n}(1-\d_{i_1,i_2})$ leads to $n(n-1)$ equal
terms. And, in $D_2(x_1,x_2)$ the sum
$\sum_{i_1,i_2=1}^{n}\d_{i_1,i_2}$ gives $n$ equal terms.

Thus,
\ben
|S_2(x_1;x_2)|\le |D_1(x_1,x_2)|+ |D_2(x_1,x_2)|.
\een

Comparing (\ref{rhotilqp}) with (\ref{rhoR}), we note that if
$R_i\cap\{x_1,x_2\}=\emptyset$ then $\tilde\rho(R_i)=\rho(R_i)$. If
$R_i\cap\{x_1,x_2\}\neq\emptyset$, we can obtain the result
(\ref{somaR}) of the lemma \ref{Lemma3} for $\tilde\rho(R_i)$ by
changing $n_k(s)$ and $m_k(s)$ by $n_k(s)+1$ and $m_k(s)+1$. With such result, we change
$\Gamma(d_k)$ by $\Gamma(d_k+1)$ in (\ref{Soma1}) and obtain an
extra $\prod_{k=1}^{n}d_k$, which is bounded by $e^{2(n-1)}$. We use
the lemma \ref{Cota da integral} with $\alpha=\beta=1$ to bound the factors $l_q$ and $l_p$
in (\ref{rhotil}). Hence, we can apply the lemma \ref{Lemma3} and
corollary \ref{Corol4} to estimate $\tilde\rho(R_i)$ (changing some
multiplicative constants).

Now, let us find an upper bound for  the term $|D_1(x_1,x_2)|$. We have
\be
| D_1(x_1,x_2)|
\leq
\sum_{n\geq 2}{1\over (n-2)!}
B_{n}(x_1,x_2),
\ee
where
\ben
B_{n}(x_1,x_2)=\!\!\!\!\!\!
\sum_{R_1,\dots ,R_{n}\subset\La\atop
|R_{i}|\ge 2,\,
x_1\in R_1,x_2\in R_2}\!\!\!\!\!\!\!\!\!\!\!\!\!\!\!\!
\left|\phi^{T}(R_1 ,R_2 ,\dots , R_n)
|\tilde\r(R_1)|
|\tilde\r(R_2)| |\tilde\r(R_3)\dots\tilde\r(R_n)\right|.
\een
Note that in (\ref{phiT}), for $n\geq 2$,
$\phi^{T}(R_1 ,\dots , R_n)>0$ only if $g(R_1 ,\dots ,R_n)\in G_n$.
Thus,
\ben
\sum_{R_1,\dots ,R_{n}\subset\La\atop
|R_{i}|\ge 2,\,
x_1\in R_1,x_2\in R_2}\!\!\!\!\!\!\!\!\!\!\!\!\!\!
\left|\phi^{T}(R_1 ,R_2 ,\dots , R_n)\right| [\cdot] =
\sum_{g\in G_n}\left|\sum_{f\in G_n\atop\subset g}(-1)^{|f|}\right|
\sum_{R_{1},\dots ,R_{n}\subset\La:\,
|R_{i}|\ge 2\atop g(R_{1},\dots ,R_{n})=g,
\,x_1\in R_1,x_2\in R_2}\!\!\!\!\!\!\!\!\!\!\!\!\!\!\!\!\!\!\!\!
[\cdot].
\een
By the Rota formula \cite{Rota}, we have
\be
\left|\sum_{f\in G_n\atop f\subset g}(-1)^{|f|}\right|\leq
\sum_{\tau\in T_n: \tau \subset g}1 \equiv N(g).
\ee
A proof of the Rota formula above can be found e.g.
in Refs. \cite{Rota} and \cite{Si}.

We recall now that \ben \sum_{g\in G_n}[\cdot]=\sum_{\t\in T_n}
\sum_{g: \,\t\subset g}{1\over N(g)}[\cdot], \een since in the
double sum $\sum_{\t}\sum_{g\supset t}$ each $g$ will be repeated
exactly $N(g)$ times.

Thus,
\ben
B_{n}(x_1,x_2)\leq
\sum_{\t\in T_n}w_n(\t,x_1,x_2),
\een
where
\ben
w_n(\t,x_1,x_2)\equiv
\sum_{R_{1},\dots ,R_{n}\subset\La:\,
|R_{i}|\ge 2\atop g(R_1,R_{2},\dots ,R_{n})\supset \t,
\,x_1\in R_1,x_2\in R_2}\!\!\!\!\!\!\!\!\!\!\!\!\!\!\!\!\!\!\!\!\!\!\!\!
|\tilde\r(R_1)|
|\tilde\r(R_2)||\tilde\r(R_3)\dots\tilde\r(R_n)|.
\een

Using now the obvious bound
\ben
\sum_{R:\, R\cap R'\neq\emptyset}|\cdot|\leq
|R'|\sup_{x\in R'}\sum_{R:\, x\in R}|\cdot|,
\een
and denoting again as $\bar\t$ the subtree of $\t$ which is the unique path
joining vertex $1$ to vertex $2$, and denoting
as $I_\t=\{1,i_1,\dots ,i_k,2\}$ the
ordered set of
the vertices of $\bar\t$,
one can  easily check that
\ben
& &w_n(\t,x_1,x_2)\leq
\prod_{i\notin I_\t}^{n}\left[\sup_{x\in \mathbb{Z}}\sum_{R_i:\, x\in R_i}
|R_i|^{d_i -1}|\tilde\r(R_i)|\right]
\\
& & \times
\sum_{R_1,R_{i_1},\dots, R_{i_k},R_2:x_1\in R_1 ,x_2\in R_2
\atop R_1\cap R_{i_1}\neq \emptyset,\dots R_{i_k}\cap R_2\neq \emptyset }\!\!\!\!\!\!\!\!\!\!\!\!\!\!\!\!\!\!
|R_1|^{d_1-1}|\tilde\r(R_1)|
|R_2|^{d_2-1}|\tilde\r(R_2)|\prod_{i\in I_\t\atop i\neq 1,2}^{n}
|R_i|^{d_i-2}|\tilde\r(R_i)|
\\
&\leq&
\prod_{i\notin I_\t}^{n}\left[\sup_{x\in \L}\sum_{R_i:\, x\in R_i}
(d_i-1)!|\tilde\r(R_i)|e^{|R_i|}\right](d_1-1)!(d_2-1)!
\\
& &\times
\sum_{R_1,R_{i_1},\dots, R_{i_k},R_2:x_1\in R_1 ,x_2 \in R_2 \atop
R_1\cap R_{i_1}\neq \emptyset,
\dots R_{i_k}\cap R_2\neq \emptyset}\!\!\!\!\!\!\!\!\!\!\!\!\!\!\!\!\!\!\!\!\!
|\tilde\r(R_1)|e^{|R_1|}
|\tilde\r(R_2)|
e^{|R_2|}\prod_{i\in I_\t\atop i\neq 1,2}{(d_i -2)!}
|\tilde\r(R_i)|e^{|R_i|},
\een
since $|R|^n \leq n! e^{|R|}$.
Now, note that
\ben
 \sum_{R_1,R_{i_1},\dots, R_{i_k},R_2:
x_1\in R_1 ,x_2\in R_2\atop
R_1\cap R_{i_1}\neq \emptyset,\dots
R_{i_k}\cap R_2\neq \emptyset} \leq
 \sum_{x_{i_0}\in \L}\sum_{x_{i_1}\in \L}\dots
\sum_{x_{i_k}\in \L}
\sum_{R_{1}\atop x_1,x_{i_0}\in R_{1}}
\sum_{R_{i_1}\atop x_{i_0},x_{i_1}\in R_{i_1}}
\!\!\dots\!\!
\sum_{R_{i_k}\atop x_{i_{k-1}},x_{i_k}\in
R_{i_k}}
\sum_{R_{2}\atop x_{i_{k}},x_2\in R_{2}}.
\een

Hence, recalling Eq.(\ref{somaR}) and applying iteratively the
inequality (\ref{SomaFxy}) with $w_1=1/2$ and $w_2=2/3$,
\ben
& &\sum_{R_1,R_{i_1},\dots, R_{i_k},R_2:
x_1\in R_1 ,x_2\in R_2\atop
R_1\cap R_{i_1}\neq \emptyset,\dots
R_{i_k}\cap R_2\neq \emptyset}\!\!\!\!\!\!\!\!\!\!\!\!\!\!\!\!\!\!\!\!\!
|\tilde\r(R_1)|e^{|R_1|}
|\tilde\r(R_2)|
e^{|R_2|}
\prod_{i\in I_\t}
|\tilde\r(R_i)|e^{|R_i|}\\
&\leq&
\sum_{x_{i_0}\in \L}\sum_{x_{i_1}\in \L}\dots
\sum_{x_{i_k}\in \L}
c [\e(K)]^{{\rm max}\{\frac{|(x_1)_0-(x_{i_0})_0|}{\varepsilon},1\}}F_{x_1x_{i_0}}^{(2/3)}
\dots
c [\e(K)]^{{\rm max}\{\frac{|(x_{i_k})_0-(x_{2})_0|}{\varepsilon},1\}}F_{x_{i_k}x_{2}}^{(2/3)}\\
&\leq&
[\mathcal{O}(1)]^{k+1}c^{k+2}[\e(K)]^{{\rm max}\{\frac{|(x_{1})_0-(x_{2})_0|}{\varepsilon},k+2\}}
F_{x_1 x_2}^{(1/2)},
\een
since $\e(K) < 1$ and %$\max\{|(x_1)_0-(x_{i_0})_0|,1\}+\sum_{j=0}^{k-1}\max\{|(x_{i_j})_0-(x_{i_{j+1}})_0|,1\}+\max\{|(x_{i_k})_0-(x_2)_0|,1\} %\geq k+2$.\\
$|(x_1)_0-(x_2)_0| \le |(x_1)_0-(x_{i_0})_0|+|(x_{i_0})_0-(x_{i_1})_0|
+\dots+|(x_{i_k})_0-(x_{2})_0|$.\\

Thus, using the corollary \ref{Corol4} and noting that
$|\{1,...,n\}\backslash I_\t|=n-k-2$,
\ben
w_n(\t,x_1,x_2)&\leq&
(d_1-1)!(d_2-1)!
\left[\prod_{i\notin I_\t}^{n}\sup_{x\in \mathbb{Z}}\sum_{R_i:\, x\in R_i}
(d_i-1)!|\r(R_i)|e^{|R_i|}\right]\\
& \times &
\left[\prod_{i\in I_\t\atop i\neq 1,2}^{n}
(d_i -2)!\right]
c^{k+2}[\e(K)]^{{\rm max}\{\frac{|(x_1)_0-(x_2)_0|}{\varepsilon}, k+2\}} [\mathcal{O}(1)]^{k+1}
%[\e(K)]^{k+2} [\mathcal{O}(1)]^{k+1}
F_{x_1 x_2}^{(1/2)}\\
&\leq&
[\mathcal{O}(1)]^n c^n [\e(K)]^{{\rm max}\{\frac{|(x_1)_0-(x_2)_0|}{\varepsilon}, n\}} F_{x_1 x_2}^{(1/2)}
%[\mathcal{O}(1)]^n [\e(K)]^{n} F_{x_1 x_2}^{(1/2)}
\prod_{i=1}^{n}{(d_i -1)!}.
\een
Finally, carrying out the sum over $\t$ (and using, once again, the Cayley
formula) we obtain
\ben
B_n(x_1,x_2)\leq (n-2)![4\mathcal{O}(1)]^n[\e(K)]^{{\rm max}\{\frac{|(x_1)_0-(x_2)_0|}{\varepsilon}, n\}}
%B_n(x_1,x_2)\leq (n-2)![4\mathcal{O}(1)]^n[\e(K)]^{n}
F_{x_1 x_2}^{(1/2)}.
\een

Taking $K$ small enough to make
$4\mathcal{O}(1)\e(K)<1$,
for the contribution of $D_1$ to the correlations, we get the following
bound:
\be
| D_1(x_1,x_2)|
\leq
\sum_{n\geq 2}[4\mathcal{O}(1)]^n[\e(K)]^{{\rm max}\{\frac{|(x_1)_0-(x_2)_0|}{\varepsilon}, n\}}
%\sum_{n\geq 2}[4\mathcal{O}(1)]^n[\e(K)]^{n}
 F_{x_1 x_2}^{(1/2)}
\leq \mathcal{O}(1) [\e(K)]^{\frac{|(x_1)_0-(x_2)_0|}{\varepsilon}} F_{x_1 x_2}^{(1/2)}.
%&\leq& \mathcal{O}(1) [\e(K)]^{2} F_{x_1 x_2}^{(1/2)}.
\ee
In a similar and much
easier way one can also prove a completely analogous
bound for $|D_2(x_1,x_2)|$
\be
|D_2(x_1,x_2)|\le \mathcal{O}(1) [\e(K)]^{\frac{|(x_1)_0-(x_2)_0|}{\varepsilon}}
%|D_2(x_1,x_2)|\le \mathcal{O}(1) \e(K)
F_{x_1 x_2}^{(1/2)}.
\ee

Hence,
\be
|S_2(x;y)|
&\le& \mathcal{O}(1) [\e(K)]^{\frac{|x_0-y_0|}{\varepsilon}} F_{x y}^{(1/2)}
\le \mathcal{O}(1) [\e(K)]^{\frac{|x_0-y_0|}{\varepsilon}}
 e^{-\frac{|x_0-y_0|}{2\varepsilon}}\left(\frac{1-\d_{\vec{x},\vec{y}}}{|\vec{x}-\vec{y}|^p}
 +\d_{\vec{x},\vec{y}}\right) \nonumber \\
&\le& \mathcal{O}(1) e^{-m'(K)|x_0-y_0|}
\left(\frac{1-\d_{\vec{x},\vec{y}}}{|\vec{x}-\vec{y}|^p}
 +\d_{\vec{x},\vec{y}}\right),
\ee
where, since $\e(K) < 1$, we write above
\ben
m'(K) \equiv \frac{-\log[\e(K)] + 1/2}{\e} > 0.
\een
%\ben
%|S_2(x;y)|
%&\le& \mathcal{O}(1) [\e(K)] F_{x y}^{(1/2)}\\
%&\le& \mathcal{O}(1) \e(K) e^{-|x_0-y_0|/2}\left(\frac{1-\d_{\vec{x},\vec{y}}}{|\vec{x}-\vec{y}|^p}
%+\d_{\vec{x},\vec{y}}\right).
%\een

It is important to remark that the existence of a convergent polymer expansion, such as that presented above, allows us to obtain
also a lower bound for the correlations. Roughly, if we write the polymer series as a main term plus corrections, we get the upper bound;
and the lower bound is given by the main term minus corrections, see Ref.\cite{PS}.

In short, the results of this section may be summarized as follows.

\begin{theorem}
The two-point function $S_{2}(x;y)$ (\ref{principal}) of the anharmonic chain of oscillators with discrete times, written as a polymer expansion (\ref{SerieS2}), converges absolutely and uniformly  in the
volume $|\Lambda|$ (number of sites $N$ and time $\mathfrak{T}$), for $\zeta, M, \lambda$ large enough. Moreover, $S_{2}(x;y)$ has the  upper bound
\begin{equation*}
|S_2(x;y)|
\leq C' e^{-m'|x_0-y_0|}\left(\frac{1-\d_{\vec{x},\vec{y}}}{|\vec{x}-\vec{y}|^p}
+\d_{\vec{x},\vec{y}}\right).
\end{equation*}
And a similar lower bound follows, with other properly chosen parameters $C''$ and $m''$.
\end{theorem}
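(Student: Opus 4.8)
The plan is to derive both assertions directly from the estimates already assembled, so that the theorem is the synthesis of the convergence analysis of Section~\ref{sec:conv} and the decay analysis of Section~\ref{sec:decay}. First I would establish uniform convergence of the polymer expansion by verifying the Koteck\'y--Preiss criterion of Lemma~\ref{lemma:KP}. The activities $\rho(R)$ contain a sum over all connected graphs on $R$, which I would tame with the Brydges--Battle--Federbush tree-graph inequality (Lemma~\ref{3graph}), reducing it to a sum over spanning trees weighted by the single-cell moment integrals of Lemma~\ref{Cota da integral}. Since every factor $A^{(s)}_{xy}$ is proportional to a positive power of $J$, $\l^{-1/3}$ or $c_1$, all of them are driven to zero in the regime of large $\zeta$, $M$ and $\l$; balancing the Cayley count of trees with prescribed incidences against the $\Gamma(d_k)$ factors from the integrals then yields Corollary~\ref{Corol4}, namely $\sup_x\sum_{R\ni x}|\rho(R)|e^{|R|}\le c\,\mathcal{O}(1)\,\e(K)$. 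Choosing $K$ small enough that $c\,\mathcal{O}(1)\,\e(K)<1$ verifies (\ref{CondConv}) with $a=1$, which by Lemma~\ref{lemma:KP} gives convergence of $\log\Xi_\La$ uniformly in $|\La|$, hence in $N$ and $\mathfrak{T}$.

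For the upper bound I would work from the truncated series (\ref{SerieS2}) for $S_2(x_1;x_2)$, splitting it as $S_2=D_1+D_2$ according to whether $x_1$ and $x_2$ lie in the same polymer. The essential input is the time-locality established in Section~\ref{sec:conv}: since each link $A^{(s)}_{xy}$ vanishes unless $|x_0-y_0|=\varepsilon$, a connected polymer joining the times of $x_1$ and $x_2$ must satisfy $|R|\ge |(x_1)_0-(x_2)_0|/\varepsilon+1$, which extracts the factor $[\e(K)]^{|(x_1)_0-(x_2)_0|/\varepsilon}$ from the geometric series in $\e(K)$. Applying the convolution inequality (\ref{SomaFxy}) iteratively along the unique tree path from $x_1$ to $x_2$ preserves the spatial power law $|\vec{x}-\vec{y}|^{-p}$, so that both $D_1$ and $D_2$ are bounded by $\mathcal{O}(1)\,[\e(K)]^{|(x_1)_0-(x_2)_0|/\varepsilon}F_{x_1x_2}^{(1/2)}$. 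Writing $F_{x_1x_2}^{(1/2)}=e^{-|(x_1)_0-(x_2)_0|/(2\varepsilon)}\big((1-\delta_{\vec{x},\vec{y}})|\vec{x}-\vec{y}|^{-p}+\delta_{\vec{x},\vec{y}}\big)$ and combining the two exponentials produces the claimed decay with $m'=(-\log[\e(K)]+1/2)/\varepsilon>0$ and a suitable constant $C'$.

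The lower bound I would obtain from the general structural fact about convergent polymer expansions (see Ref.~\cite{PS}): the truncated series for $S_2$ equals its leading one- or two-polymer term plus a remainder whose absolute value is majorized by the same convergent sum multiplied by an extra power of the small parameter $\e(K)$. Retaining the explicit leading contribution and subtracting this majorant gives a lower bound of exactly the same form, with properly readjusted constants $C''$ and exponent $m''$.

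The main obstacle is the one already confronted in the two preceding sections, and not a new difficulty: the simultaneous control of the combinatorial explosion of connected graphs and of the two distinct kinds of decay. The delicate point is that the factorial growth $\prod_k\Gamma(d_k)$ coming from the high moments of the anharmonic single-cell measure must be exactly absorbed by the $1/\prod_i(d_i-1)!$ produced by the Cayley enumeration of trees, while the convolution bound (\ref{SomaFxy}) must remain stable under composition so that the power-law spatial decay survives the resummation at the very same time that time-locality of the links generates the exponential decay in $x_0-y_0$. Once these balances are secured---as they are in Lemmas~\ref{Cota da integral} and \ref{Lemma3}---the theorem follows by collecting the estimates.
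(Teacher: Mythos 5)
Your proposal is correct and follows essentially the same route as the paper: Koteck\'y--Preiss verified via the Brydges--Battle--Federbush tree inequality and the single-cell moment bounds, then the $D_1+D_2$ split, the time-locality argument extracting $[\e(K)]^{|x_0-y_0|/\varepsilon}$, and the iterated convolution bound preserving the spatial power law, with the lower bound obtained as main term minus controlled corrections. The only step you leave implicit is the Rota-formula bound $|\phi^T|\le N(g)$ and the resummation over spanning trees used to organize the truncated series, but this is a standard ingredient of the framework you invoke.
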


Some short notes are appropriate here.

As described above, the decay in space of the two-point function $S_{2}$ is polynomial, and follows the decay of the interparticle interaction $J_{j,\ell}$.
As the two-point function is directly related to the heat flow and to the thermal conductivity in these systems given by chains of oscillators, such a result
is of direct interest: see e.g. Ref.\cite{PA} in which we assumed the space decay of terms in the expression of the heat flow related to the space decay of the
interparticle interaction - result which is proved by theorem 2.

An investigation about the precise rate for the exponential decay in time of $S_{2}$ (something between $m'$ and $m''$) may be possible by using standard techniques of constructive field
theory related to spectral analysis \cite{GJ}, but it is beyond the aim of the present work. See e.g. Refs. \cite{P2002,FOPS} for examples of detailed study of the two and four-point
correlations decay in time, via such an approach, in the stochastic Ginzburg-Landau model (a simpler system with nonconservative dynamics and, in these specific works, relaxing to equilibrium).

Finally, we have a remark about the time discretization. Here, as previously described, we work on a lattice with time step $\varepsilon$. To follow the dependence on $\varepsilon$, note that  such factors
 are hidden in some terms, for example, in $|x_{0}-y_{0}|$ (which is a multiple of $\varepsilon$). Hence, in $m'|x_{0}-y_{0}|$, which appears in the theorem above, there is a factor $\varepsilon$
in the denominator of the expression for $m'$, as well as another one in the numerator within $|x_{0}-y_{0}|$. However, we need to say that, considering the whole problem (all expressions and manipulations), if we naively try
to recover the original nonlinear model with continuous time by simply taking the limit of $\varepsilon$ going to zero, divergences and problems will appear. In short, recovering the continuous limit is not a
trivial work. To illustrate such adversity, we note that the expression for the Gaussian measure related
to the harmonic part (see Eq.(\ref{quadratica})) is well defined and controllable in the continuous limit, but we do not have the inverse of the covariance $\mathcal{C}$, i.e., the diffusion matrix $\mathcal{C}$ is not invertible in the continuous time limit (and, we recall, $\mathcal{C}^{-1}$, or a related expression, is well defined and important in the formalism with discrete time). Such trouble  is not specific for the present investigation: it is very well known in the study of stochastic processes, see e.g. Ref.\cite{Drozdov} and references there in. Moreover, such difficulty in taking the continuous limit is also common in other related problems in
physics, as already said: recall, for example, the ultraviolet (UV) limit in Quantum Field Theory.  Anyway, in a lattice with a fixed step, we can still obtain a precise description for the heat flow investigation, as we confirm in the next section.

%%%%%%%%%%%%%%%%%%%%%%%%%%%%%%%%%%%%%%%%%%%%%%%%%%%%%%%%%%%%%%%%%%%%%%%%%%%%%%%%%%%%%%%%%%%%%%%%%%%%%%%%%%%%%%%%%%%%%%%%%%%%%%%%%%%%%%%%%%%%%%%%%%%%%%%%%%%%%%%%%%%%%%%%%%%%%%%%%%%%%%%%%%%%%%%%%

\section{A Concrete Example: Analysis of the Anharmonic Chain with Quartic On-Site Potential}\label{sec:exemplo}

Now we turn to the analysis of a concrete and recurrent problem: the heat flow in the anharmonic chain of oscillators with quartic on-site potential. Here, besides the specific quartic on-site potential, we take a
 model with weak and nearest-neighbor interparticle interactions, and with the inner reservoirs in the self-consistent condition. Precisely, for the interpaticle interaction we take  $\mathcal{J}_{ij}\neq 0 \iff i = (j+N)\pm 1$, $|\mathcal{J}_{ij}|\ll 1$; and, for the anharmonic on-site potential, $\mathcal{P}(\phi_{j}) =
\phi_{j}^{4}/4$. Moreover, for simplicity, besides the regimes already considered ($m=1$, $M=2+\varepsilon$), we still assume the regime of
high anharmonicity and temperature.

The present section is directed toward a
twofold aim: first, to show the usefulness of the polymer expansion convergence by describing an interesting result obtained within a perturbative analysis; second, to show the trustworthiness of the discrete time approximation by
presenting results which, in comparison with well known numerical computations, are precise.

We need to remark that quite similar results have been already described, by some of the authors, in a previous work \cite{PFL}. However, at that time, the analysis was carried out in an uncontrolled perturbative approach: the convergence of a related
polymer expansion was unknown. For completeness, we repeat some details here.

The heat current, as previously described, follows from Eq.(\ref{fluxo1})
$$
\mathcal{F}_{j,j+1} = \frac{J_{j,j+1}}{2} \Big\langle
 (\varphi_{j} - \varphi_{j+1})(\varphi_{j+N} + \varphi_{j+1+N})
\Big\rangle .
$$
To analyze it in the steady state, we need to study the averages of $\varphi_{j}(\mathfrak{T})\varphi_{j+N+1}(\mathfrak{T})$,
$\varphi_{j}(\mathfrak{T})\varphi_{j+N}(\mathfrak{T})$, etc., as $\mathfrak{T}\rightarrow\infty$. That is, we need to evaluate some two-point correlation functions.
Recall that, from Eq.(\ref{correlations}), the correlations were first written as
$$
\left\langle \varphi_{i}(\mathfrak{T})\varphi_{j}(\mathfrak{T}) \right\rangle =
\int \phi_{i}(\mathfrak{T})\phi_{j}(\mathfrak{T}) \exp[-W(\phi)] d\mu_{\mathcal{C}}\ ,
$$
where
\begin{eqnarray*}
W(\phi) &=& \int_{0}^{\mathfrak{T}} \!\!\!\! \phi_j(s)\mathcal{J}^{\dagger}_{ji}\gamma_i^{-1}d\phi_i(s) +
\lambda\gamma_i^{-1}{P}'(\phi)_i(t)d\phi_i(s) +
 \phi_j(s)\mathcal{J}_{ij}^{\dagger}\gamma_{i}^{-1}A^{0}_{ik}\phi_k(s) ds + \\
&& + \lambda
\gamma_{i}^{-1} {P}' (\phi)_i(s) A^0_{ik} \phi_k(s) ds +
\frac{1}{2}\phi_{j'}(s)\mathcal{J}_{j'i}^{\dagger}\gamma_i^{-1}\mathcal{J}_{ij}\phi_j(s) ds +
\\
&& + \frac{1}{2}\lambda^2\gamma_i^{-1}({P}'(\phi)_i)^2(s) ds + \lambda \gamma_{i}^{-1}
{P}' (\phi)_i(s) \mathcal{J}_{ij}\phi_j(s) ds,
\end{eqnarray*}
and the Gaussian measure $d\mu_{\mathcal{C}}$ was given by Eq.(\ref{quadratica}). Due to exceeding difficulties in the investigation within this first formalism, our idea to perform the computation  is resumed in the following strategy (as exhaustively emphasized throughout the paper): we
introduce the approximation of discrete times, see Eq.(\ref{principal}), and rearrange the integral representation in terms of a new measure, precisely, a suitable single spin distribution (SSD) with nonlinear parts, as
presented in section The Polymer Expansion. In short, we rewrite $\exp[-W(\phi)] d\mu_{\mathcal{C}}$ as $\exp[-\tilde{W}(\phi)]
d \nu$.
In this properly built SSD $d\nu$, instead of considering the fields  $\phi_j$ and $\phi_i$ always in separate (as in a usual polymer expansion), we join in the same cell the pairs $\phi_{j}(s)$ and
$\phi_{i}(s+\varepsilon)$ with $i=j+N$ (of course, $\phi_{j}(\mathfrak{T})$ and $\phi_{i}(0)$ do not have pairs). Precisely, our SSD is given by the expression
\begin{widetext}
\begin{equation}
d\nu(\phi_{j}(s),\phi_{i=j+N}(s+\varepsilon)) = \exp\left\{\varepsilon\left[ -\frac{1}{2}\lambda^{2}\gamma_{j}^{-1}\phi_{j}^{6}(s)
  - \frac{1}{2T_{i}}\phi_{i}^{2}(s+\varepsilon) - \gamma_{j}^{-1}\lambda\phi_{j}^{3}(s)\phi_{i}(s+\varepsilon) + \ldots \right]\right\} d\phi_{j}(s) d\phi_{i}(s+\varepsilon) / N ,
\end{equation}
\end{widetext}
where the dots above describe subdominant terms, $N$ is the normalization, and $\phi_{i}^{2}(s+\varepsilon)$ was extracted from $(\phi,\mathcal{D}^{-1}\phi)$, which comes from the harmonic potential related to
the Gaussian measure which appeared in the previous formalism. And $\tilde{W}(\phi)$ above is given by subdominant terms that we left behind, both from $\exp[-W(\phi)]$ and $d\mu_{\mathcal{C}}$, after writing the expression for the SSD, i.e.
\begin{multline}
\tilde{W}(\phi) =
 -\sum_{s, i,j,\ldots} \varepsilon\left[
  \phi_{j}(s)\mathcal{J}_{ji}^{\dagger}\gamma_{i}^{-1}\phi_{i}(s+\varepsilon) + \right. \phi_{j}(s) \mathcal{J}_{ji}^{\dagger}\frac{M_{i-N}}{\gamma_{i}}\phi_{i-N}(s) \\
+ \frac{1}{2\gamma_{i}}\phi_{j'}(s) \mathcal{J}_{j'i}^{\dagger}\mathcal{J}_{ij}\phi_{j}(s)
+ \left.  \frac{\lambda}{\gamma_{i}} {P}'(\phi_{i-N}(s))\mathcal{J}_{ij}\phi_{j}(s) +
\frac{1}{2}\phi_{k}(s) \tilde{\mathcal{D}}_{k,k'}^{-1}(s,s') \phi_{k'}(s')
\right],
\end{multline}
where $\tilde{\mathcal{D}}_{k,k'}^{-1}$ is the quadratic part $\mathcal{D}_{k,k'}^{-1}$ without the terms which are already considered in the SSD.

Note that, essentially, $\phi_{i}^{2}$ and $\phi_{j}^{6}$ rule  the behavior of the SSD above, and the forthcoming computations.
Hence, in our final formalism here, the integral representation for the two-point function is given as product of these SSD (with cells of sites $[j,s]$
and $[i=j+N,s+\varepsilon]$) and the exponential of
terms involving the weak interaction $J$, which couples different cells, and the remaining terms from $(\phi,\mathcal{C}^{-1}\phi)$,
which are also small: e.g., for the part involving $\phi_{j}$, in the regime of large anharmonicity, rescaling the dominant term
$\lambda^{2}\phi_{j}^{6}$ as $\tilde{\phi}_{j}^{6}$ in the s.s.d., this part will involve $\tilde{\phi}_{j}$ and powers of $1/\lambda$.

Now, we perform a perturbative computation, considering in $\exp\{-\tilde{W}(\phi)\}$ only the terms up to first order, i.e.
taking $\exp[-\tilde{W}(\phi)] \approx 1 -\tilde{W}(\phi)$. These leading terms are directly related to the terms which appear in the polymer expansion
written as $(e^{G_{x,y}} - 1)$; see section The Polymer Expansion.

Thus,  carrying out the computations, we note that a first important contribution is given by
\begin{align*}
\lefteqn{\int (\phi_{i-N}(\mathfrak{T})\phi_{i+1}(\mathfrak{T}))\cdot \varepsilon[\lambda\gamma_{i+1}^{-1}\phi_{i+1-N}^{3}(\mathfrak{T}-\varepsilon)\phi_{i+1}(\mathfrak{T})]_{*}} \\
& \cdot \varepsilon[\lambda\gamma_{i+1}^{-1}\phi_{i+1-N}^{3}(\mathfrak{T}-\varepsilon)\mathcal{J}_{i+1,i-N}\phi_{i-N}(\mathfrak{T}-\varepsilon)] \\
& \cdot \varepsilon[\phi_{i-N}(\mathfrak{T}-\varepsilon)\mathcal{C}^{-1}_{i-N,i-N}(\mathfrak{T}-\varepsilon,\mathfrak{T})\phi_{i-N}(\mathfrak{T})] d\tilde{\nu}(\phi)\\
& \sim c'(\varepsilon) J \frac{1}{\lambda^{4/3}}\frac{T_{i+1}^{2/3}}{T_{i}}  ,
\end{align*}
where  $[\cdot]_{*}$ above comes from the ``cross'' term in the s.s.d.; $d\tilde{\nu}$ is the main part of the s.s.d. (involving
$\phi_{i}^{2}$ and $\phi_{j}^{6}$); and $c'$ is a numerical factor. And, a second important contribution comes from terms similar to
\begin{align*}
\lefteqn{\int (\phi_{i-N}(\mathfrak{T})\phi_{i+1}(\mathfrak{T}))\cdot \varepsilon[\phi_{i-N}(\mathfrak{T}-\varepsilon)\mathcal{J}^{\dagger}_{i-N,i+1}\gamma_{i+1}^{-1}\phi_{i+1}(\mathfrak{T})]} \\
& \cdot \varepsilon[\phi_{i-N}(\mathfrak{T}-\varepsilon)\mathcal{C}^{-1}_{i-N,i-N}(\mathfrak{T}-\varepsilon,\mathfrak{T})\phi_{i-N}(\mathfrak{T})] d\tilde{\nu}(\phi)\\
& \sim c'' J \frac{1}{\lambda^{4/3}}\frac{1}{T_{i}^{1/3}}  .
\end{align*}
 Hence, summing up  all leading terms (with $\mathfrak{T}\rightarrow\infty$), and considering a small difference
between $T_{i+1}$ and $T_{i}$ (such that $T_{i+1}^{\alpha}-T_{i}^{\alpha} \approx \alpha T_{i}^{\alpha-1}(T_{i+1}-T_{i})$), we get
\begin{equation}
\mathcal{F}_{j,j+1} \approx - c \frac{J^{2}}{\lambda^{4/3}}\frac{1}{T_{j}^{4/3}}(T_{j+1}-T_{j}) .
\label{fluxofinal}
\end{equation}

Now, after obtaining the expression for $\mathcal{F}_{j,j+1}$, the computation of the heat current in terms of the temperatures at the boundaries is straightforward.
The self-consistent condition in the steady state says that there is no neat heat flows from the inner reservoirs to the system, i.e., it gives
\begin{equation}
\mathcal{F}_{1,2} = \mathcal{F}_{2,3} = \ldots = \mathcal{F}_{N-1,N} \equiv \mathcal{F} \label{correntesteady}.
\end{equation}
These equations together with Eq.(\ref{fluxofinal}) give us
\begin{eqnarray*}
\mathcal{F}(\mathcal{C}T_{1}^{\alpha}) &=& T_{1} - T_{2} \\
\mathcal{F}(\mathcal{C}T_{2}^{\alpha}) &=& T_{2} - T_{3} \\
\ldots &=& \ldots \\
\mathcal{F}(\mathcal{C}T_{N-1}^{\alpha}) &=& T_{N-1} - T_{N} .
\end{eqnarray*}
Summing up the expressions above, we find
\begin{equation*}
\mathcal{F} = \mathcal{K}\frac{(T_{1}-T_{N})}{N-1} ,
\end{equation*}
where
\begin{equation*}
\mathcal{K} = \left\{ \mathcal{C}T_{1}^{\alpha} +
\mathcal{C}T_{2}^{\alpha} + \ldots + \mathcal{C}T_{N-1}^{\alpha}\right\}^{-1}\cdot (N-1)
,\label{condutividade}
\end{equation*}
with $\mathcal{C}^{-1} = \displaystyle{c(\varepsilon) \frac{J^{2}}{\lambda^{4/3}}}$, $\alpha=4/3$.
For a small gradient of temperature, i.e., if $T_{j} \approx T$, we have the Fourier's law in the chain with thermal conductivity
\begin{equation}
\mathcal{K} \sim c(\varepsilon) \frac{J^{2}}{\lambda^{4/3}T^{4/3}} .
\end{equation}

  We emphasize the
  interest of such result. For the case of a strong anharmonic on-site potential, the effects of the internal reservoirs become less important, and one expects a system with behavior close to that observed in a chain with thermal baths only at the boundaries. Exhaustive computer simulations have been already carried out for these anharmonic chains with quartic potential and thermal baths only at the ends, and they give a thermal conductivity $\mathcal{K} \approx 1/T^{1,35}$ \cite{AK1,AK2}, essentially the same result obtained by the perturbative computation within our approach with the approximation of discrete times. Similar results, still for the anharmonic chain with reservoirs only at the boundaries, are presented in Ref.\cite{LiLi}, here obtained by using nonequilibrium molecular dynamics. It would be very interesting to make a comparison between our findings and numerical simulations in the original  anharmonic model with inner noises, but we do not know any numerical result in the literature for such model, and we have
  to leave this work for the experts in computer techniques.

%%%%%%%%%%%%%%%%%%%%%%%%%%%%%%%%%%%%%%%%%%%%%%%%%%%%%%%%%%%%%%%%%%%%%%%%%%%%%%%%%%%%%%%%%%%%%%%%%%%%%%%%%%%%%%%%%%%%%%%%%%%%%%%%%%%%%%%%%%%%%%%%%%%%%%%%%%%%%%%%%%%%%%%%%%%%%%%%%%%%%%%%%%%%%%%%%%%%%%%%%%%%%%%%%

\section{Final remarks} \label{sec:final}

We conclude with some comments and remarks.

First, as example of the trustworthiness of the perturbative analysis within the integral formalism, we recall that, when restricted to the easier case of harmonic chain of  oscillators with the self-consistent condition, perturbative computations within our integral representation, with continue time and without simplification
in the quadratic term (see Ref.\cite{PF}), reproduces the well known result that is described in Ref.\cite{BLL}, there obtained by a completely different method. That is, within our integral approach, we have proved that Fourier's law holds in the chain of harmonic oscillators with self-consistent inner baths, and we
obtained the expression for the thermal conductivity as that derived by other methods.

The behavior of the two-point function for a system with interparticle interaction with polynomial decay, what is proved here, allows us to establish the heat flow between different sites in a chain
with interparticle interaction beyond nearest-neighbor sites. As already said, that is a problem of physical interest: for example, in a recent work \cite{PA}, one of the authors and a collaborator, by assuming the heat flow behavior (correlation decay in space), which is proved in the present paper, show that the existence of interparticle interactions beyond nearest neighbors may increase by thousand times the thermal rectification in a graded chain, and may also avoid the decay of such rectification with the system size, which are important properties for the theoretical study and even experimental fabrication of thermal diodes. The effect of long range interactions increasing the thermal rectification of anharmonic crystals has been confirmed, by computer simulations, even in
anharmonic crystals without inner self-consistent baths \cite{Cas}.

It is also worth to mention that, in a previous work within a perturbative computation (which could not be rigorously justified at that time, before the present
results), we describe nontrivial properties of the heat flow in an inhomogeneous anharmonic chain, namely, thermal rectification and negative differential
thermal resistance \cite{Prapid}.

To conclude, we believe that, even though within an effective model (anharmonic chain of oscillators with inner stochastic reservoirs) and an approximation in the integral representation (discrete times),
the approach and results presented here may be of great utility in the qualitative understanding of the heat flow properties in the steady state of high anharmonic systems submitted to different temperatures.
In particular, the existence of a convergent polymer expansion
making possible  a perturbative investigation is of usefulness in the study of systems with long range interaction, and also in the analysis of inhomogeneous and asymmetric models, in which, the important phenomenon of thermal rectification appears.

\section*{Acknowledgements}
We thank the referees for carefully reading the manuscript, and for the list of suggestions which helped us to improve the presentation of the paper. This work was partially supported by
CNPq, Brazil.

%%%%%%%%%%%%%%%%%%%%%%%%%%%%%%%%%%%%%%%%%%%%%%%%%%%%%%%%%%%%%%%%%%%%%%%%%%%%%%%%%%%%%%%%%%%%%%%%%%%%%%%%%%%%%%%%%%%%%%%%%%%%%%%%%%%%%%%%%%%%%%%%%%%%%%%

\appendix \section{On the used approximations} \label{apendice}

As an argument, beyond the  technical reasons already mentioned, to support the study of the system with time regularization (i.e., without short times),
we recall that most of the physical research problems related to similar models involve questions about properties of the steady state, reached as $\mathfrak{T} \rightarrow \infty$. Moreover, from numerical studies of
similar dynamical problems carried out by physicists \cite{Delfini}, lower frequencies seem to dominate the transport on a large scale (the scale of the whole chain).
That is, it seems that the time regularization does not spoil the main features of the original problem, related to heat flow properties in the steady state.

In relation to the other approximation assumed in the present paper, namely, the modified covariance (\ref{quadratica}), we show below that it is, indeed, the main part to the original harmonic interaction.

The expressions  for the covariance are given by Eqs.(\ref{covariance1},\ref{covariance}). First,  we note that the replacement
of $C(t,t)$ by
$C\equiv C(\infty,\infty)$
does not  change  the steady heat current for the harmonic case, as shown in Ref.\cite{PF}. And so, we study the covariance with such a replacement. Furthermore, we
may write $\exp[-(t-s)A_{0}]$, $t\geq s$, as
\begin{eqnarray}
\exp\left (-|\tau| A^0 \right ) = ~ e^{-|\tau| \alpha}\left[ \cosh(\tau\rho)
\left (\begin{array}{cc}1 & 0 \\0&1 \end{array}\right )
+\frac{\sinh(|\tau|\rho)}{\rho}\left (
\begin{array}{cc}
\alpha & I \\
-\mathcal{M} & -\alpha
\end{array} \right ) \right],\label{exp}
\end{eqnarray}
where $\tau=t-s$, $\alpha=\zeta/2$, $\rho=\left(\alpha^2-M\right)^{1/2}$; and a similar expression given by the transposed matrix follows for negative $\tau$, see  Ref.\cite{BLL}. To proceed, we may introduce the
discrete times and study the Fourier transform $\hat{\mathcal{C}_{*}}(p_{0})$, where
\begin{eqnarray}
 \mathcal{C}_{*}(\tau=t-s)=\left \{
\begin{array}{c} e^{-(t-s)A^0}~C, ~~t\geq s,  \\
C~e^{-(s-t)A^{0^\dagger}}, ~~t\leq s.\end{array}\right .
\end{eqnarray}
Then, with the inverse Fourier transform of  $\hat{\mathcal{C}_{*}}^{-1}(p_{0})$, we obtain an expression (but huge and unclear)  for $\mathcal{D}^{-1}$.  Instead of that, we propose the
use of an approximated expression (but with the main part of $\mathcal{C}$), derived as described below.

To begin, we consider the regime of strong pinning. For strong pinning $M > \alpha^{2}$, we have $\cosh(\tau\rho) = \cos(\tau\tilde{\rho})$ and
$\sinh(\tau\rho)/\rho = \sin(\tau\tilde{\rho})/\tilde{\rho}$, where $\tilde{\rho}$ is given by $\tilde{\rho}=\left(M-\alpha^2\right)^{1/2}$. Taking the Fourier transform  of $\exp[-|\tau|\alpha]\cos(\tilde{\rho}\tau)$,  we have
\begin{equation*}
\widehat{\exp[-|\tau|\alpha]\cos(\tilde{\rho}\tau)} = 2\alpha\left\{ \frac{M + p_{0}}{(M + p_{0}^{2})^{2} - 4(M-\alpha^{2})p_{0}^{2}} \right\} \equiv \hat{D}(p_{0}) ,
\end{equation*}
where we used continue times above just for ease of computation; discrete times lead to a similar expression with $1-\cos(p_{0})$ replacing $p_{0}^{2}$.
The second part of $\exp[-|\tau| A^{0}]$, with $\sin(\tau\tilde{\rho})/\tilde{\rho}$, involves a matrix whose diagonal terms will be very small for large $\tilde{\rho}$. Due to $M$, the off diagonal terms are not small in principle, but they are related to the ``crossed'' part $qp$. These terms will be insignificant  later (they will be small) in the interacting model controlled by the cluster expansion:  after a scaling, $qp$ will involve a small factor $1/\lambda^{1/3}$ and it will be easily controlled in the case of large $\lambda$. Here, for simplicity, we ignore this off diagonal part. Hence, choosing e.g. $M = 3\alpha^{2}$, we have $\hat{D}^{-1}(p_{0}) \simeq \frac{M}{2\alpha} + \frac{c}{\alpha} p_{0}^{2}$, where $c$ is a numerical factor.

In short,  by assuming the approximation above (discarding the off diagonal terms in the original Gaussian covariance), and still taking discrete times in a lattice with spacement $\varepsilon$,  we can  write the Gaussian measure
as proposed in Eq.(\ref{quadratica}).
%$$ \mathcal{C}^{-1} \longrightarrow \left[ \frac{M}{\zeta}\delta_{t,t'} + c_{1}(-\Delta(t,t'))\right]C^{-1} . $$

\end{document}